\documentclass[11pt, reqno]{amsproc}
\usepackage{epstopdf}
\usepackage{mathrsfs, color}
\usepackage{stmaryrd}
\usepackage{cases}
\usepackage{amssymb}
\usepackage{amsmath}
\usepackage{amsfonts}
\usepackage{amsmath,amstext,amsbsy,amssymb}
\usepackage{verbatim}
\usepackage{float}
\usepackage{graphicx}

\newtheorem{theorem}{Theorem}[section]

\theoremstyle{definition}

\theoremstyle{remark}

\numberwithin{equation}{section} \errorcontextlines=0

\begin{document}
{\title[Supplemental material:Analytic Formulas for Quantum Discord of Special Families of N-Qubit States]
{Supplemental material:Analytic Formulas for Quantum Discord of Special Families of $N$-Qubit States}
%
%
%

\author{Jianming Zhou$^1$}

\author{Xiaoli Hu$^2$*}

\author{Honglian Zhang$^3$}

\author{Naihuan Jing$^4$}

\thanks{1 School of Mathematics and Physics, University of South China, Hengyang, Hunan 421001, China} 
\thanks{2 School of Artificial Intelligence, Jianghan University, Wuhan, Hubei 430056, China}
\thanks{3 Department of Mathematics, Shanghai University, Shanghai 200444, China } 
\thanks{4 Department of Mathematics, North Carolina State University, Raleigh, NC 27695, USA}
\thanks{* Corresponding to xiaolihumath@jhun.edu.cn}    

\maketitle

\section*{Appendix A: Quantum Discord for a family of non-symmetric Multi-Qubit state}
\setcounter{equation}{0}
\setcounter{theorem}{0}
\renewcommand\theequation{A\arabic{equation}}
\renewcommand\thetheorem{2.\arabic{theorem}}

According to the definition of the generalized discord for multi-partite states in \cite{RLB},
the quantum discord of an $N$-partite state $\rho$ is given by
\begin{equation}\label{A1.2}
\begin{split}
\mathcal{Q}_{A_1,A_2,\ldots,A_N}(\rho)=&\min_{\Pi^{A_1\ldots A_{N-1}}}\big[S_{A_2|\Pi^{A_1}}(\rho)+\ldots+S_{A_N|\Pi^{A_1\ldots A_{N-1}}}\\
&-S_{A_2\ldots A_N|A_1}(\rho)\big],
\end{split}
\end{equation}
where $S_{A_2\ldots A_N|A_1}(\rho) = S(\rho)-S(\rho_{A_1})$ is the unmeasured conditional entropy with $S(\rho_{X}) = -\mathrm{tr}[\rho_{X}\log_2\rho_{X}]$ the von Neumann entropy of the quantum state on system X, and
$S_{A_k|\Pi^{A_1\cdots A_{k-1}}}(\rho) = \sum_{j_1\cdots j_{k-1}}p_j^{(k-1)}S_{A_1\ldots A_{k-1}}(\rho^{(k-1)}_j)$ with $\Pi_j^{(k-1)} = \Pi^{A_1\cdots A_{k-1}}_{j_1\cdots j_{k-1}}$, $\rho_j^{(k-1)} = 1/p^{(k-1)}_j\Pi_j^{(k-1)}\rho\Pi_j^{(k-1)}$ and $p_j^{(k-1)} = \mathrm{tr}(\Pi_j^{(k-1)}\rho\Pi_j^{(k-1)})$.

Considering the following $N$-partite states
\begin{equation}\label{A2.1}
\begin{split}
\rho&=\frac{1}{2^N}(\mathbb{I}+\sum_{i=1}^3c_i\sigma_i\otimes\cdots\otimes\sigma_i+s\sigma_3\otimes \mathbb{I}\otimes\cdots\otimes \mathbb{I}+s \mathbb{I}\otimes\sigma_3 \otimes\cdots\otimes \mathbb{I}\\
&+\cdots + s\mathbb{I}\otimes \mathbb{I}\otimes\cdots\otimes \sigma_3),
\end{split}
\end{equation}
where $\sigma_1,\sigma_2,\sigma_3$ represent the Pauli matrices, and $\mathbb{I}$ is the identity operator.

First, we consider the case of $N =3$, the state

\begin{equation}\label{A2.2}
\begin{split}
\rho=\frac{1}{8}(\mathbb{I}+\sum_{i=1}^3{c_i}\sigma_i\otimes\sigma_i\otimes\sigma_i+s\sigma_3\otimes \mathbb{I}\otimes \mathbb{I}
+s \mathbb{I}\otimes\sigma_3\otimes \mathbb{I}+ s\mathbb{I}\otimes \mathbb{I}\otimes \sigma_3).
\end{split}
\end{equation}

Recall the function: $H_{y}(x)=(1+y+x)\log_2(1+y+x)+(1+y-x)\log_2(1+y-x)$.
When $y=0$, $H_y(x)$ can be abbreviated as $H(x)$.
The reduced state on subsystem $A_1$ is $\rho_{A_1} = \mathrm{tr}_{A_2A_3}(\rho) = \frac{1}{2}(\mathbb{I}+s\sigma_3)$ and the corresponding entropy  is $S(\rho_{A_1}) = 1-\frac{1}{2}H(s)$. Therefore
\begin{equation}\label{A2.3}
\begin{split}
S_{A_2A_3|A_1}(\rho) = S(\rho)-S(\rho_{A_1}) = -\sum_i\lambda_i\log_2\lambda_i-1+\frac{1}{2}H(s),
\end{split}
\end{equation}
where $\lambda_i$ are eigenvalues of $\rho$.
For the computational basis $\{ |j\rangle,~j=0,1\}$ of $\mathbb{C}^2$, let $\Pi_j = |j\rangle\langle j|$, then the von Neumann measurement on subsystem $A_1$ is given by $\{P^{A_1}_{j} = V_{A_1}\Pi_jV_{A_1}^\dagger,~ j=0,1\}$ for some unitary matrix $V_{A_1}\in \mathrm{U}(2)$.
And $V_{A_1} = t_{A_1}\mathbb{I}+\sqrt{-1}\vec{y}_{A_1}\cdot\vec{\sigma}$ are the Bloch expansion of special unitary matrices
with real Bloch coefficients $t_{A_1} = \frac{1}{2}\mathrm{tr}(\mathbb{I}V_{A_1}), y_{{A_1},k} = \frac{1}{2}\mathrm{tr}[(\sqrt{-1}\sigma_k)^{\dagger}V_{A_1}]= \frac{-\sqrt{-1}}{2}\mathrm{tr}[\sigma_kV_{A_1}]$ for $k\in 1,2,3$ and $y_{{A_1}1}^2+y_{{A_1}2}^2+y_{{A_1}3}^2+t_{A_1}^2=1$.

After the measurement $\{P^{A_1}_{j}\}$, the state $\rho$ will become the ensemble $\{\rho_j,p_j\}$ with
$\rho_j = (P^{A_1}_{j}\otimes \mathbb{I})\rho(P^{A_1}_{j}\otimes \mathbb{I})/p_j$ and probability $p_j = \mathrm{tr}[(P^{A_1}_{j}\otimes \mathbb{I})\rho(P^{A_1}_{j}\otimes \mathbb{I})]$.
By use of the follow symmetric relations
{\small{\begin{equation}
\begin{split}
V_{A_1}^{\dagger}\sigma_1V_{A_1}&=(t_{A_1}^2+y_{{A_1}1}^2-y_{{A_1}2}^2-y_{{A_1}3}^2)\sigma_1+2(t_{A_1}y_{{A_1}3}+y_{{A_1}1}y_{{A_1}2})\sigma_2\\
&+2(-t_{A_1}y_{{A_1}2}+y_{{A_1}1}y_{{A_1}3})\sigma_3,\\
V_{A_1}^{\dagger}\sigma_2V_{A_1}&=(t_{A_1}^2+y_{{A_1}2}^2-y_{{A_1}3}^2-y_{{A_1}1}^2)\sigma_2+2(t_{A_1}y_{{A_1}1}+y_{{A_1}2}y_{{A_1}3})\sigma_3\\
&+2(-t_{A_1}y_{{A_1}3}+y_{{A_1}1}y_{{A_1}2})\sigma_1,\\
V_{A_1}^{\dagger}\sigma_3V_{A_1}&=(t_{A_1}^2+y_{{A_1}3}^2-y_{{A_1}1}^2-y_{{A_1}2}^2)\sigma_3+2(t_{A_1}y_{{A_1}2}+y_{{A_1}1}y_{{A_1}3})\sigma_1\\
&+2(-t_{A_1}y_{{A_1}1}+y_{{A_1}2}y_{{A_1}3})\sigma_2,\\
\end{split}
\end{equation}}}
we get $p_j = \frac{1}{2}(1+(-1)^jsz_3)$, $(j=0,1)$  and
\begin{equation}\label{A2.5}
\begin{split}
\rho_j =& \frac{1}{4(1+(-1)^jsz_3)}V_{A_1}\Pi_jV_{A_1}^\dagger\otimes[\mathbb{I}\otimes \mathbb{I}+(-1)^j\sum_i^3c_iz_i\sigma_i\otimes\sigma_i\\
&+(-1)^jsz_3\mathbb{I}\otimes \mathbb{I}+s\sigma_3\otimes \mathbb{I}+s\mathbb{I}\otimes\sigma_3 ],
\end{split}
\end{equation}
where
\begin{align*}
&z_1=2(-t_{A_1}y_{{A_1}2}+y_{{A_1}1}y_{{A_1}3}),\\
&z_2=2(t_{A_1}y_{{A_1}1}+y_{{A_1}2}y_{{A_1}3}),\\
&z_3=t_{A_1}^2+y_{{A_1}3}^2-y_{{A_1}1}^2-y_{{A_1}2}^2.
\end{align*}

Then $\mathrm{tr}_{A_3}(\rho_j) = \frac{1}{2(1+(-1)^jsz_3)}V_{A_1}\Pi_jV_{A_1}^\dagger\otimes[\mathbb{I}+(-1)^jsz_3\mathbb{I}+s\sigma_3]$, $(j=0,1)$.
The conditional entropy after measurement $\{P^{A_1}_{j}\}$ is given by
\begin{equation}\label{A2.6}
\begin{split}
S_{A_2|\Pi^{A_1}}(\rho) &= -\sum_{j=0}^1p_j(\lambda_j^+\log_2\lambda_j^++\lambda_j^-\log_2\lambda_j^-) \\
&= 1-\frac{1}{4}H(sz_3+s)-\frac{1}{4}H(sz_3-s)+\frac{1}{2}H(sz_3)\\
&=:1-G(z_3).
\end{split}
\end{equation}

In order to evaluate the quantity $S_{A_3|\Pi^{A_1A_2}}(\rho)$, we need to perform measurement on subsystem $A_2$ based on the measurement outcomes 
of $\{P^{A_1}_{j}\}$.
Let $\{P_{jk}^{A_1A_2} = P_{j}^{A_1}\otimes P_{k|j}^{A_2}\}$ denote the bipartite measurement on composite system $A_1A_2$,
where $\{P^{A_2}_{k|j} = V_{A_2}^{(j)}\Pi_k[V_{A_2}^{(j)}]^\dagger,~k=0,1\}$ are the von Neumann measurement on subsystem $A_2$
when the outcome of $\{P_{j}^{A_1}\}$ is $j~ (j = 0, 1)$.
Similarly, 
$V_{A_2}^{(j)} = t_{A_2}^{(j)}\mathbb{I}+\sqrt{-1}\vec{y}_{A_2}^{(j)}\cdot\vec{\sigma}$ is the Bloch expansion of the special unitary matrix satisfying
$t_{A_2}^{(j)}\in \mathbb{R},~ \vec{y}_{A_2}^{(j)} = [y_{{A_2}1}^{(j)},y_{{A_2}2}^{(j)},y_{{A_2}3}^{(j)}]\in \mathbb{R}^3$ and $[y_{{A_2}1}^{(j)}]^2+[y_{{A_2}2}^{(j)}]^2+[y_{{A_2}3}^{(j)}]^2+[t_{A_2}^{(j)}]^2=1$.

If the measurement outcome on system $A_1$ is 0, the measurement $\{P^{A_2}_{k|0}\}$ is performed on state $\rho_0$ in \eqref{A2.5}.
The state after performing the measurement $\{P_{k|0}^{A_2}\}$ reduces to
\begin{equation}\label{A2.7}
\begin{split}
\rho_{0k} &= \frac{1}{2[1+sz_3+(-1)^ksz_3^{(0)}]}V_{A_1}\Pi_0V_{A_1}^{\dagger}\otimes V_{A_2}^{(0)}\Pi_k[V_{A_2}^{(0)}]^\dagger\otimes\Big[ \mathbb{I}\\
&+(-1)^k\sum_i^3c_iz_iz_i^{(0)}\sigma_i+sz_3\mathbb{I}+(-1)^ksz_3^{(0)} \mathbb{I}+s\sigma_3 \Big]
\end{split}
\end{equation}
and $p_{0k} = \frac{1}{4}(1+sz_3+(-1)^ksz_3^{(0)})$ with
$z_1^{(0)}=2(-t_{A_1}^{(0)}y_{{A_1}2}^{(0)}+y_{{A_1}1}^{(0)}y_{{A_1}3}^{(0)}),z_2^{(0)}=2(t_{A_1}^{(0)}y_{{A_1}1}^{(0)}+y_{{A_1}2}^{(0)}y_{{A_1}3}^{(0)}),
z_3^{(0)}=(t_{A_1}^{(0)})^2+(y_{{A_1}3}^{(0)})^2-(y_{{A_1}1}^{(0)})^2-(y_{{A_1}2}^{(0)})^2$.

If the measurement outcome on system $A_1$ is 1, the measurement $\{P^{A_2}_{k|1}\}$ is performed on state $\rho_1$ in Eq.(\ref{A2.5}).
The state after performing the measurement $\{P_{k|1}^{A_2}\}$ reduces to
\begin{equation}\label{A2.8}
\begin{split}
\rho_{1k} &= \frac{1}{2(1-sz_3+(-1)^ksz_3^{(1)})}V_{A_1}\Pi_1V_{A_1}^\dagger\otimes V_{A_2}^{(1)}\Pi_k[V_{A_2}^{(1)}]^\dagger\otimes\Big[ \mathbb{I}\\
&-(-1)^k\sum_i^3c_iz_iz_i^{(1)}\sigma_i-sz_3\mathbb{I}+(-1)^ksz_3^{(1)} \mathbb{I}+s\sigma_3 \Big]
\end{split}
\end{equation}
and $p_{1k} = \frac{1}{4}(1-sz_3+(-1)^ksz_3^{(1)})$ with
$z_1^{(1)}=2(-t_{A_1}^{(1)}y_{{A_1}2}^{(1)}+y_{{A_1}1}^{(1)}y_{{A_1}3}^{(1)}),
z_2^{(1)}=2(t_{A_1}^{(1)}y_{{A_1}1}^{(1)}+y_{{A_1}2}^{(1)}y_{{A_1}3}^{(1)}),
z_3^{(1)}=(t_{A_1}^{(1)})^2+(y_{{A_1}3}^{(1)})^2-(y_{{A_1}1}^{(1)})^2-(y_{{A_1}2}^{(1)})^2$.

The state $\rho_{\Pi^{A_1A_2}}$ is given by $\rho_{\Pi^{A_1A_2}}=\sum_{jk}p_{jk}\rho_{jk}$.
Let $$\varphi_0 = (c_1z_1z_1^{(0)})^2+(c_2z_2z_2^{(0)})^2+(c_3z_3z_3^{(0)})^2$$
and
$$\varphi_1 = (c_1z_1z_1^{(1)})^2+(c_2z_2z_2^{(1)})^2+(c_3z_3z_3^{(1)})^2,$$
then the average entropy of subsystem $A_3$ under the measurement $\{P_{jk}^{A_1A_2}\}$ is given by
\begin{align*}\label{A2.9}
S_{A_3|\Pi^{A_1A_2}}(\rho) = & -\sum_{j=0}^1\sum_{k=0}^1p_{jk}(\lambda_{jk}^+\log_2\lambda_{jk}^++\lambda_{jk}^-\log_2\lambda_{jk}^-) \\
=&1-\frac{1}{8}\Big[H_{sz_3+sz_3^{(0)}}\big(\sqrt{s^2+2sc_3z_3z_3^{(0)}+\varphi_0}\big)\\
&+H_{sz_3-sz_3^{(0)}}\big(\sqrt{s^2-2sc_3z_3z_3^{(0)}+\varphi_0}\big)\\
&+H_{-sz_3+sz_3^{(1)}}\big(\sqrt{s^2+2sc_3z_3z_3^{(1)}+\varphi_1}\big)\\
&+H_{-sz_3-sz_3^{(1)}}\big(\sqrt{s^2-2sc_3z_3z_3^{(1)}+\varphi_1}\big)\\
&-2H_{sz_3}\big(sz_3^{(0)}\big)
-2H_{-sz_3}\big(sz_3^{(1)}\big)\Big]\\
=:&1-F(z_3,z_3^{(0)},z_3^{(1)},\varphi_0,\varphi_1) .
\end{align*}
By \eqref{A1.2}, we can obtain the tripartite quantum discord
\begin{equation}
\begin{split}
\mathcal{Q}&_{A_1,A_2,A_3}(\rho)=\min_{\Pi^{A_1 A_{2}}}[S_{A_2|\Pi^{A_1}}(\rho)+S_{A_3|\Pi^{A_1 A_{2}}}(\rho)-S_{A_2 A_3|A_1}(\rho)]\\
&=\sum_i\lambda_i\log_2\lambda_i+3-\frac{1}{2}H(s)-\max{\{G(z_3)+ F(z_3,z_3^{(0)},z_3^{(1)},\varphi_0,\varphi_1)\}}\\
&=:\sum_i\lambda_i\log_2\lambda_i+3-\frac{1}{2}H(s)-\max{Y(z_3,z_3^{(0)},z_3^{(1)},\varphi_0,\varphi_1)}.
\end{split}
\end{equation}

To solve the maximum of the function $Y$, we first notice that $\vec{z}=(z_1, z_2, z_3)$, $\vec{z}^{(0)}=(z_1^{(0)},z_2^{(0)},z_3^{(0)})$ and $\vec{z}^{(1)}=(z_1^{(1)},z_2^{(1)},z_3^{(1)})$ are independent of each other, so the maximum value of $F$ is obtained when the condition that the function $G$ takes the maximum value. At this point, the variable $z_3$ are certain constants in the function $F$. Namely, the function $F$ is actually only related to $\vec{z}^{(0)}, \vec{z}^{(1)}$. Meanwhile, one note that $F$ is an even function of $z^{(0)}_3$ and $z^{(1)}_3$, respectively, then it is enough to consider $z^{(0)}_3\in [0,1]$ and $z^{(1)}_3\in [0,1]$. Furthermore, we can simplify the optimization to that of a two-variable function.
\begin{theorem}\label{AThm1}
Let $c=\max\{|c_1|,|c_2|\}$, for the family of three-qubit state in \eqref{A2.2}, the quantum discord is explicitly computed as follows.

Case 1 : if $c_3\leq 0$ and $c_3^2\geq c^2$ or $s^2c_3\leq (1-|s|)(c_3^2-c^2)$, then
\begin{equation}
\begin{split}
\mathcal{Q}_{A_1,A_2,A_3}(\rho)=&\sum_i\lambda_i\log_2\lambda_i+3-\frac{1}{8}[H_{2s}(|s+c_3|)\\
&+H_{-2s}(|s-c_3|)+H(|s+c_3|)+H(|s-c_3|)].
\end{split}
\end{equation}

In particular, if $|c_1|=|c_2|=|c_3|=c\in (-1,0]$,
then
\begin{equation}
\begin{split}
\mathcal{Q}_{A_1,A_2,A_3}(\rho)=&\sum_i\lambda_i\log_2\lambda_i+3-\frac{1}{8}[H_{2s}(|s+c|)\\
&+H_{-2s}(|s-c|)+H(|s+c|)+H(|s-c|)],
\end{split}
\end{equation}

Case 2 : if $s=0$, set $C=\max{\{|c_1|,|c_2|,|c_3|\}}$, then we have
\begin{equation}\label{Athm1case2}
\begin{split}
\mathcal{Q}_{A_1,A_2,A_3}(\rho)=\frac{1}{2}H(\sqrt{c_1^2+c_2^2+c_3^2})-\frac{1}{2}H(C),
\end{split}
\end{equation}

\end{theorem}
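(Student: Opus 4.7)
The plan is to carry out the two-step optimization of $Y = G + F$ described just before the theorem statement, exploiting the fact that $G$ depends only on $z_3$ and that the vectors $\vec{z}, \vec{z}^{(0)}, \vec{z}^{(1)}$ are unit vectors on $S^2$ (a consequence of $t^2 + y_1^2 + y_2^2 + y_3^2 = 1$).

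\textbf{Step 1 (maximizing $G$).} Since $G$ is a one-variable function of $z_3 \in [-1,1]$, I would differentiate $G(z_3) = -1 + \tfrac14 H(sz_3+s) + \tfrac14 H(sz_3-s) - \tfrac12 H(sz_3)$ and show, by a direct convexity/monotonicity argument on $x\mapsto (1+x)\log_2(1+x)$, that the extremum of $G$ on $[-1,1]$ is attained at the boundary $z_3 = \pm 1$. Since $G$ is even in $z_3$, I may fix $z_3 = 1$, which corresponds to a $\sigma_3$-aligned measurement on $A_1$. This locks $z_3$ in $F$, reducing the remaining optimization to the four variables $z_3^{(0)}, z_3^{(1)}, \varphi_0, \varphi_1$.

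\textbf{Step 2 (maximizing $F$ in Case 1).} With $z_3$ fixed, $F$ splits into two decoupled optimizations, one over $\vec z^{(0)}$ and one over $\vec z^{(1)}$. For each $j$, the Bloch constraint $(z_1^{(j)})^2+(z_2^{(j)})^2+(z_3^{(j)})^2 = 1$ lets me write
\[
\varphi_j = c_3^2 + (c_1^2 - c_3^2)(z_1^{(j)})^2 + (c_2^2 - c_3^2)(z_2^{(j)})^2.
\]
Under the hypothesis $c_3 \le 0$ and $c_3^2 \ge c^2$, the coefficients $c_1^2-c_3^2$ and $c_2^2-c_3^2$ are non-positive, so increasing $(z_3^{(j)})^2$ toward $1$ both raises $\varphi_j$ toward $c_3^2$ and (because $c_3\le 0$) adjusts the cross term $2sc_3 z_3 z_3^{(j)}$ favorably. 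Using monotonicity of $H_y$ in its second argument and a paired comparison of the terms $H_{sz_3\pm sz_3^{(j)}}(\cdot)$ against $2H_{sz_3}(sz_3^{(j)})$, I would show that each of the four main $H_y$-terms of $F$ is maximized by $z_3^{(j)} = 1$, $\varphi_j = c_3^2$. When $c_3^2 < c^2$ the quadratic term in $\varphi_j$ can pull the optimum off this axis; the alternative hypothesis $s^2 c_3 \le (1-|s|)(c_3^2 - c^2)$ is exactly the bookkeeping inequality needed so that the cross-term gain still dominates the loss in $\varphi_j$, and I would establish it by a direct Lagrange-multiplier check on the Bloch sphere.

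\textbf{Step 3 (assembling Case 1 and deriving Case 2).} Substituting $z_3 = z_3^{(0)} = z_3^{(1)} = 1$ and $\varphi_0 = \varphi_1 = c_3^2$ into $Y$, the radicals collapse to $|s \pm c_3|$ and the $H_{\pm sz_3}(sz_3^{(j)}) = H_{\pm s}(s)$ pairs cancel with the $1 - \tfrac12 H(s)$ and $S_{A_2 A_3 | A_1}(\rho)$ contributions, yielding the formula of Case 1. The equal-coefficient sub-case is then immediate. For Case 2, $s=0$ kills $G$ identically and collapses each $H_y(\cdot)$ into $H(\cdot)$, so $Y$ reduces to $-\tfrac12 H(\sqrt{\varphi})$ with $\varphi = (c_1 z_1^{(\cdot)})^2+(c_2 z_2^{(\cdot)})^2+(c_3 z_3^{(\cdot)})^2$ depending on a single Bloch unit vector. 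Maximizing $\varphi$ on $S^2$ gives $\varphi_{\max}=C^2$ along the axis associated to the largest $|c_i|$, and the eigenvalue sum $\sum_i\lambda_i\log_2\lambda_i$ of the Bell-diagonal-like state at $s=0$ simplifies to $-2 + \tfrac12 H(\sqrt{c_1^2+c_2^2+c_3^2})$, producing the stated formula.

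\textbf{Main obstacle.} The delicate step is Step 2: rigorously ruling out interior critical points of $F$ on $S^2 \times S^2$ under the precise algebraic hypotheses of Case 1, where a naive monotonicity argument fails because the coefficients of $H_y$ have mixed signs. I expect the second alternative hypothesis $s^2 c_3 \le (1-|s|)(c_3^2 - c^2)$ to enter here as the exact threshold guaranteeing that the boundary maximum along $\sigma_3$ still beats any off-axis competitor, and verifying this will require a careful Lagrange-multiplier/second-derivative analysis rather than symmetry alone.
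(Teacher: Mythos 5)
Your overall strategy coincides with the paper's: maximize $G$ at $z_3=1$, relax $\varphi_j$ to the quantity $c^2\big[1-(z_3^{(j)})^2\big]+(c_3z_3^{(j)})^2$ using the Bloch constraint (your displayed expression for $\varphi_j$ is exactly this relaxation once $c_1,c_2$ are replaced by $c$), argue that the reduced one-variable functions of $z_3^{(0)},z_3^{(1)}$ are maximized at the endpoint $1$, and handle Case 2 by the trivial bound $\varphi_j\le C^2$. However, there is a genuine gap: the step you yourself flag as the ``main obstacle'' --- showing that under the stated hypotheses the maximum over $z_3^{(j)}$ sits at the boundary despite the mixed signs in the cross terms $\pm 2sc_3z_3^{(j)}$ and in the subscripts $sz_3\pm sz_3^{(j)}$ --- is precisely the content of the proof, and you only announce a plan (``a paired comparison'', ``a careful Lagrange-multiplier/second-derivative analysis'') without executing it. The paper closes this step by an explicit computation: it first shows $Y$ is increasing in $\varphi_0,\varphi_1$, then writes out $\partial F/\partial z_3^{(0)}$ in the form
\begin{equation*}
\frac{1}{8}\left[\frac{sc_3+(c_3^2-c^2)z_3^{(0)}}{1+s+sz_3^{(0)}}\,\frac{1}{B_+}\log_2\frac{1+B_+}{1-B_+}
+\frac{-sc_3+(c_3^2-c^2)z_3^{(0)}}{1+s-sz_3^{(0)}}\,\frac{1}{B_-}\log_2\frac{1+B_-}{1-B_-}
+s\log_2\frac{1-B_+^2}{1-B_-^2}\right],
\end{equation*}
observes that $c_3\le 0$ forces $s\log_2\frac{1-B_+^2}{1-B_-^2}\ge 0$ via the sign of $B_+^2-B_-^2$, and then uses the monotonicity of $g(x)=\frac{1}{x}\log_2\frac{1+x}{1-x}$ together with an ordering of $B_+$ versus $B_-$ (split into $s\ge 0$ and $s\le 0$) to bound the remaining two terms below by a nonnegative multiple of $(c_3^2-c^2)z_3^{(0)}$. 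Without some version of this chain of inequalities, your argument does not establish that the second alternative hypothesis $s^2c_3\le(1-|s|)(c_3^2-c^2)$ is sufficient; you only conjecture that it is ``the exact threshold.''

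Two smaller points. First, your claim in Step 3 that the $s=0$ eigenvalue sum gives $\sum_i\lambda_i\log_2\lambda_i=-2+\tfrac12H(\sqrt{c_1^2+c_2^2+c_3^2})$ is off by one: the three mutually anticommuting operators $\sigma_i^{\otimes 3}$ give eigenvalues $\tfrac{1\pm r}{8}$ each with multiplicity $4$, hence $\sum_i\lambda_i\log_2\lambda_i=\tfrac12H(r)-3$, which is what makes the stated Case 2 formula come out. Second, your Step 2 drops the factors $z_1,z_2$ from $\varphi_j$ after fixing $z_3=1$; this happens to be a valid upper bound (and is the same relaxation the paper uses via $\psi_j$), but you should state it as an inequality attained at $z_3^{(j)}=1$ rather than as an identity.
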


\begin{proof}
Case 1 :
First, one notices that $Y(z_3,z_3^{(0)},z_3^{(1)},\varphi_0,\varphi_1)$ is a strictly increasing function of $\varphi_0$ and $\varphi_1$:
{
\begin{equation}
\begin{split}
\frac{\partial Y}{\partial \varphi_0}  =& \frac{1}{16\sqrt{s^2+2sc_3z_3z_3^{(0)}+\varphi_0}}\log_2\frac{1+sz_3+sz_3^{(0)}+\sqrt{s^2+2sc_3z_3z_3^{(0)}
+\varphi_0}}{1+sz_3+sz_3^{(0)}-\sqrt{s^2+2sc_3z_3z_3^{(0)}+\varphi_0}}\\
&+\frac{1}{16\sqrt{s^2-2sc_3z_3z_3^{(0)}+\varphi_0}}\log_2\frac{1+sz_3-sz_3^{(0)}+\sqrt{s^2-2sc_3z_3z_3^{(0)}
+\varphi_0}}{1+sz_3-sz_3^{(0)}-\sqrt{s^2-2sc_3z_3z_3^{(0)}+\varphi_0}}\\
\geq & 0
\end{split}
\end{equation}}
and
{
\begin{equation}
\begin{split}
\frac{\partial Y}{\partial \varphi_1}  &= \frac{1}{16\sqrt{s^2+2sc_3z_3z_3^{(1)}+\varphi_1}}\log_2\frac{1-sz_3+sz_3^{(1)}+\sqrt{s^2+2sc_3z_3z_3^{(1)}
+\varphi_1}}{1-sz_3+sz_3^{(1)}-\sqrt{s^2+2sc_3z_3z_3^{(1)}+\varphi_1}}\\
&+\frac{1}{16\sqrt{s^2-2sc_3z_3z_3^{(1)}+\varphi_1}}\log_2\frac{1-sz_3-sz_3^{(1)}+\sqrt{s^2-2sc_3z_3z_3^{(1)}
+\varphi_1}}{1-sz_3-sz_3^{(1)}-\sqrt{s^2-2sc_3z_3z_3^{(1)}+\varphi_1}}\\
&\geq 0.
\end{split}
\end{equation}}
Then $\max {Y}$ takes place at the largest value of $\varphi_0$ and $\varphi_1$ for some $z_3$, $z_3^{(0)}$, $z_3^{(1)}$ $\in [0,1]$.

Set $c = \max{\{|c_1|,|c_2|\}}$, as $(z_1)^2+(z_2)^2+(z_3)^2=1$, $(z_1^{(0)})^2+(z_2^{(0)})^2+(z_3^{(0)})^2=1$ and $(z_1^{(1)})^2+(z_2^{(1)})^2+(z_3^{(1)})^2=1$,
we obtain
\begin{equation}
\begin{split}
\varphi_0 & = (c_1z_1z_1^{(0)})^2+(c_2z_2z_2^{(0)})^2+(c_3z_3z_3^{(0)})^2\\
& \leq (cz_1z_1^{(0)})^2+(cz_2z_2^{(0)})^2+(c_3z_3z_3^{(0)})^2\\
& \leq c^2[1-(z_3z_3^{(0)})^2]+(c_3z_3z_3^{(0)})^2=  :\psi_0. \\
\end{split}
\end{equation}
The last inequality follows from $(z_1z_1^{(0)})^2+(z_2z_2^{(0)})^2+(z_3z_3^{(0)})^2\leq 1$.
For each fixed $z_3,z_3^{(0)}$, the maximum value $\psi_0$ can be achieved by appropriate $z_1$, $z_2$, $z_1^{(0)}$, $z_2^{(0)}$. For example, if $|c_1|\geq |c_2|$, take $z_2=0$ and $z_2^{(0)}=0$, then $\varphi_0 = (cz_1z_1^{(0)})^2+(c_3z_3z_3^{(0)})^2 = \psi_0$.
Similarly,
$$\varphi_1 \leq c^2[1-(z_3z_3^{(1)})^2]+(c_3z_3z_3^{(1)})^2 =: \psi_1.$$
Then $\max Y(z_3,z_3^{(0)},z_3^{(1)},\varphi_0,\varphi_1) = \max Y(z_3,z_3^{(0)},z_3^{(1)},\psi_0,\psi_1)$, where $\psi_j~(j=0,1)$ are function with respect to $z_3$ and $z_3^{(j)}$.

It is not difficult to find that the function $G$ attains the maximum value at $z_3 = 1$ and $\max G = \frac{1}{4}H(2s)-\frac{1}{2}H(s)$.
Since $\vec{z}$, $\vec{z}^{(0)}$ and $\vec{z}^{(1)}$ are independent of each other, the maximum
value of the function $F$ is obtained when the condition that the function $G$ takes the maximum value. The variable $z_3$ are certain constants in the function $F$ at this moment. Namely, the function $F$ is actually only related to $\vec{z}^{(0)}, \vec{z}^{(1)}$. Therefore, $\max{Y(z_3,z_3^{(0)},z_3^{(1)},\varphi_0,\varphi_1)} = \frac{1}{4}H(2s)-\frac{1}{2}H(s)+\max {F(z_3^{(0)},z_3^{(1)})}$, where $F$ is explicitly given by
\begin{equation}
\begin{split}
F&=\frac{1}{8}\Big[H_{s+sz_3^{(0)}}\big(\sqrt{s^2+2sc_3z_3^{(0)}+\psi_0}\big)
+H_{s-sz_3^{(0)}}\big(\sqrt{s^2-2sc_3z_3^{(0)}+\psi_0}\big)\\
&+H_{-s+sz_3^{(1)}}\big(\sqrt{s^2+2sc_3z_3^{(1)}+\psi_1}\big)
+H_{-s-sz_3^{(1)}}\big(\sqrt{s^2-2sc_3z_3^{(1)}+\psi_1}\big)\\
&-2H_s\big(sz_3^{(0)}\big)-2H_{-s}\big(sz_3^{(1)}\big)\Big].
\end{split}
\end{equation}

Next, we compute the partial derivatives of $F$:
{
\begin{equation}\label{AFDZ20}
\begin{split}
\frac{\partial F}{\partial z_3^{(0)}}&=\frac{1}{8}\Bigg[\frac{sc_3-c^2z_3^{(0)}+c_3^2z_3^{(0)}}{K_+}\log_2\frac{1+s+sz_3^{(0)}+K_+}
{1+s+sz_3^{(0)}-K_+}\\
&+\frac{-sc_3-c^2z_3^{(0)}+c_3^2z_3^{(0)}}{K_-}\log_2\frac{1+s-sz_3^{(0)}+K_-}{1+s-sz_3^{(0)}-K_-}\\
&+s\log_2\frac{(1+s+sz_3^{(0)}-K_+)(1+s+sz_3^{(0)}+K_+)(1+s-sz_3^{(0)})^2}{(1+s-sz_3^{(0)}-K_-)(1+s-sz_3^{(0)}
+K_-)(1+s+sz_3^{(0)})^2}\Bigg]\\
&=\frac{1}{8}\Bigg[\frac{sc_3-c^2z_3^{(0)}+c_3^2z_3^{(0)}}{1+s+sz_3^{(0)}}\frac{1}{B_+}\log_2\frac{1+B_+}{1-B_+}\\
&+\frac{-sc_3-c^2z_3^{(0)}+c_3^2z_3^{(0)}}{1+s-sz_3^{(0)}}\frac{1}{B_-}\log_2\frac{1+B_-}{1-B_-}
+s\log_2\frac{1-B_+^2}{1-B_-^2}\Bigg],
\end{split}
\end{equation}}
{
\begin{equation}\label{AFDZ21}
\begin{split}
\frac{\partial F}{\partial z_3^{(1)}}&=\frac{1}{8}[\frac{sc_3-c^2z_3^{(1)}+c_3^2z_3^{(1)}}{L_+}\log_2\frac{1-s+sz_3^{(1)}+L_+}
{1-s+sz_3^{(1)}-L_+}\\
&+\frac{-sc_3-c^2z_3^{(1)}+c_3^2z_3^{(1)}}{L_-}\log_2\frac{1-s-sz_3^{(1)}+L_-}{1-s-sz_3^{(1)}-L_-}\\
&+s\log_2\frac{(1-s+sz_3^{(1)}-L_+)(1-s+sz_3^{(1)}+L_+)(1-s-sz_3^{(1)})^2}{(1-s-sz_3^{(1)}-L_-)(1-s-sz_3^{(1)}
+L_-)(1-s+sz_3^{(1)})^2}]\\
&=\frac{1}{8}[\frac{sc_3-c^2z_3^{(1)}+c_3^2z_3^{(1)}}{1-s+sz_3^{(1)}}\frac{1}{D_+}\log_2\frac{1+D_+}{1-D_+}\\
&+\frac{-sc_3-c^2z_3^{(1)}+c_3^2z_3^{(1)}}{1-s-sz_3^{(1)}}\frac{1}{D_-}\log_2\frac{1+D_-}{1-D_-}+s\log_2\frac{1-D_+^2}{1-D_-^2}],
\end{split}
\end{equation}}
where $K_\pm = \sqrt{(s\pm c_3z_3^{(0)})^2+c^2(1-(z_3^{(0)})^2)}$, $B_\pm = \frac{K_\pm}{1+s\pm sz_3^{(0)}}$,\\ $L_\pm = \sqrt{(s\pm c_3z_3^{(1)})^2+c^2(1-(z_3^{(1)})^2)}$ and $D_\pm = \frac{L_\pm}{1-s\pm sz_3^{(1)}}\in [0,1]$.
Since
{\scriptsize
\begin{equation}
\begin{split}
B_+^2-B_-^2 =\frac{4sz_3^{(0)}\{c_3[(1+s)^2+s^2z_3^{(0)}]-(1+s)[s^2+c_3^2(z_3^{(0)})^2+c^2(1-(z_3^{(0)})^2)]\}}
{(1+s-sz_3^{(0)})^2(1+s+sz_3^{(0)})^2},
\end{split}
\end{equation}} we notice that the last term $s\log_2\frac{1-B_+^2}{1-B_-^2}\geq 0$ of \eqref{AFDZ20} iff $s(B_+^2-B_-^2)\leq 0$, which holds on $c_3\leq 0$.
Since the function $g(x) = \frac{1}{x}\log_2\frac{1+x}{1-x}$ is strictly increasing on $(0,1)$, we have that

(i) If $s\geq 0$, $c_3\leq 0$ and $c_3^2\geq c^2$, then $B_+\leq B_-$. It implies that $\frac{1}{B_+}\log_2\frac{1+B_+}{1-B_+}\leq \frac{1}{B_-}\log_2\frac{1+B_-}{1-B_-}$ and
{\scriptsize
\begin{equation}
\begin{split}\label{AscaleI}
\frac{\partial F}{\partial z_3^{(0)}}
&\geq\frac{1}{8}[\frac{sc_3-c^2z_3^{(0)}+c_3^2z_3^{(0)}}{1+s+sz_3^{(0)}}\frac{1}{B_+}\log_2\frac{1+B_+}{1-B_+}
+\frac{-sc_3-c^2z_3^{(0)}+c_3^2z_3^{(0)}}{1+s-sz_3^{(0)}}\frac{1}{B_-}\log_2\frac{1+B_-}{1-B_-}]\\
&\geq\frac{1}{8}[\frac{sc_3-c^2z_3^{(0)}+c_3^2z_3^{(0)}}{1+s+sz_3^{(0)}}\frac{1}{B_+}\log_2\frac{1+B_+}{1-B_+}
+\frac{-sc_3-c^2z_3^{(0)}+c_3^2z_3^{(0)}}{1+s-sz_3^{(0)}}\frac{1}{B_+}\log_2\frac{1+B_+}{1-B_+}]\\
&\geq\frac{1}{8}[\frac{sc_3-c^2z_3^{(0)}+c_3^2z_3^{(0)}}{1+s+sz_3^{(0)}}\frac{1}{B_+}\log_2\frac{1+B_+}{1-B_+}
+\frac{-sc_3-c^2z_3^{(0)}+c_3^2z_3^{(0)}}{1+s+sz_3^{(0)}}\frac{1}{B_+}\log_2\frac{1+B_+}{1-B_+}]\\
&=\frac{1}{4}[\frac{(c_3^2-c^2)z_3^{(0)}}{1+s+sz_3^{(0)}}\frac{1}{B_+}\log_2\frac{1+B_+}{1-B_+}]\\
&\geq0.
\end{split}
\end{equation}}

(ii) If $s\leq 0$, $c_3\leq 0$ and $c_3^2\geq c^2$, then $B_+\geq B_-$ and we have
{\scriptsize
\begin{equation}\label{AscaleII}
\begin{split}
\frac{\partial F}{\partial z_3^{(0)}}&\geq\frac{1}{8}[\frac{sc_3-c^2z_3^{(0)}+c_3^2z_3^{(0)}}{1+s+sz_3^{(0)}}\frac{1}{B_+}\log_2\frac{1+
B_+}{1-B_+}+\frac{-sc_3-c^2z_3^{(0)}+c_3^2z_3^{(0)}}{1+s-sz_3^{(0)}}\frac{1}{B_-}\log_2\frac{1+B_-}{1-B_-}]\\
&\geq\frac{1}{8}[\frac{sc_3-c^2z_3^{(0)}+c_3^2z_3^{(0)}}{1+s+sz_3^{(0)}}\frac{1}{B_-}\log_2\frac{1+B_-}{1-B_-}
+\frac{-sc_3-c^2z_3^{(0)}+c_3^2z_3^{(0)}}{1+s-sz_3^{(0)}}\frac{1}{B_-}\log_2\frac{1+B_-}{1-B_-}]\\
&\geq\frac{1}{8}[\frac{sc_3-c^2z_3^{(0)}+c_3^2z_3^{(0)}}{1+s-sz_3^{(0)}}\frac{1}{B_-}\log_2\frac{1+B_-}{1-B_-}
+\frac{-sc_3-c^2z_3^{(0)}+c_3^2z_3^{(0)}}{1+s-sz_3^{(0)}}\frac{1}{B_+}\log_2\frac{1+B_-}{1-B_-}]\\
&=\frac{1}{4}[\frac{(c_3^2-c^2)z_3^{(0)}}{1+s-sz_3^{(0)}}\frac{1}{B_-}\log_2\frac{1+B_-}{1-B_-}]\\
&\geq0.
\end{split}
\end{equation}}
Analogous discussion can be done for \eqref{AFDZ21} to get the same results.
Then the maximum of $F(z_3^{(0)},z_3^{(1)})$ is $F(1,1)$ and the maximum of $Y$ is given by
\begin{equation}
\begin{split}
\max{Y}&=\frac{1}{8}[H_{2s}(|s+c_3|)+H_{-2s}(|s-c_3|)+H(|s+c_3|)\\
&+H(|s-c_3|)]-\frac{1}{2}H(s).
\end{split}
\end{equation}

In particular, if $c_1 =c_2 = c_3 =c\in(-1,0]$, by \eqref{AscaleI} and \eqref{AscaleII}, we can get
\begin{equation}
\begin{split}
\max{Y}&=Y(1,1)=\frac{1}{2}H(s)+\frac{1}{8}\big[H_{2s}(|s+c|)+H(|s-c|)\\
&+H(|s+c|)+H_{-2s}(|s-c|)\big].
\end{split}
\end{equation}

Case 2 :
if $s=0$, then
\begin{equation}
\begin{split}
&Y=\frac{1}{4}[H(\sqrt{\varphi_0})+H(\sqrt{\varphi_1})].
\end{split}
\end{equation}

Set $C=\max{\{|c_1|, |c_2|, |c_3|\}}$, we have
\begin{align*}\varphi_0 &= (c_1z_1z_1^{(0)})^2+(c_2z_2z_2^{(0)})^2+(c_3z_3z_3^{(0)})^2\leq C^2\\
\varphi_1 &= (c_1z_1z_1^{(1)})^2+(c_2z_2z_2^{(1)})^2+(c_3z_3z_3^{(1)})^2\leq C^2.
\end{align*}
Note that $H_y(x)$ is an even function of $x$, and is a strictly increasing function on $[0, 1]$,
thus we have
\begin{equation}
\begin{split}
\max{Y}=\frac{1}{2}H(C).
\end{split}
\end{equation}
\end{proof}

Next, we will consider the four-qubit case,
\begin{equation}\label{Afourdef}
\begin{split}
\rho=&\frac{1}{16}(\mathbb{I}+\sum_{i=1}^3c_i\sigma_i\otimes\sigma_i\otimes\sigma_i\otimes\sigma_i+s\sigma_3\otimes \mathbb{I}\otimes \mathbb{I}\otimes \mathbb{I}\\
&+s \mathbb{I}\otimes\sigma_3\otimes \mathbb{I}\otimes \mathbb{I}+ s\mathbb{I}\otimes \mathbb{I}\otimes \mathbb{I}\otimes \sigma_3),
\end{split}
\end{equation}
which is on subsystems $A_1$, $A_2$, $A_3$ and $A_4$.

It can be directly verified that $\rho_{A_1} = \mathrm{tr}_{A_2A_3A_4}=\frac{1}{2}(\mathbb{I}+s\sigma_3)$. Therefore the entropy of the subsystem $A_1$ is $S(\rho_{A_1}) = 1-\frac{1}{2}H(s)$ and we have
$S(\rho) = -\sum_i\lambda_i\log_2\lambda_i$.
Thus,
\begin{equation}
\begin{split}
S_{A_2A_3A_4|A_1}(\rho) = S(\rho)-S(\rho_{A_1}) = -\sum_i\lambda_i\log_2\lambda_i-1+\frac{1}{2}H(s).
\end{split}
\end{equation}

The von Neumann measurement on subsystem $A_1$ can be written as $\{P^{A_1}_{j} = V_{A_1}\Pi_jV_{A_1}^\dagger$, $j=0,1\}$, where $V_{A_1} = t_{A_1}\mathbb{I}+\sqrt{-1}\vec{y}_{A_1}\cdot\vec{\sigma}$
is the unitary operator with
$t_{A_1}\in \mathbb{R}, \vec{y}_{A_1} = (y_{{A_1}1},y_{{A_1}2},y_{{A_1}3})\in \mathbb{R}^3$ and $y_{{A_1}1}^2+y_{{A_1}2}^2+y_{{A_1}3}^2+t_{A_1}^2=1$.
After measuring with $\{P^{A_1}_{j}\}$, the state $\rho$ will become $\rho_{\Pi^{A_1}} = p_0\rho_0+p_1\rho_1$, where
$p_0 = \frac{1}{2}(1+sz_3)$, $p_0 = \frac{1}{2}(1-sz_3)$ and $(j=0,1)$
\begin{equation}
\begin{split}
\rho_j &= \frac{1}{8(1+(-1)^jsz_3)}V_{A_1}\Pi_jV_{A_1}^\dagger\otimes[\mathbb{I}\otimes \mathbb{I}\otimes \mathbb{I}+(-1)^j\sum_i^3c_iz_i\sigma_i\otimes\sigma_i\otimes\sigma_i\\
&+(-1)^jsz_3\mathbb{I}\otimes \mathbb{I}\otimes \mathbb{I}+s\sigma_3\otimes \mathbb{I}\otimes \mathbb{I}+s\mathbb{I}\otimes\sigma_3\otimes\mathbb{I}+s\mathbb{I}\otimes\mathbb{I}\otimes\sigma_3 ]
\end{split}
\end{equation}
with
$z_1=2(-t_{A_1}y_{{A_1}2}+y_{{A_1}1}y_{{A_1}3}),z_2=2(t_{A_1}y_{{A_1}1}+y_{{A_1}2}y_{{A_1}3}),z_3=t_{A_1}^2+y_{{A_1}3}^2-y_{{A_1}1}^2-y_{{A_1}2}^2$.
Therefore, we have $\mathrm{tr}_{A_3A_4}(\rho_0) = \frac{1}{2(1+sz_3)}V_{A_1}\Pi_0V_{A_1}^\dagger\otimes(\mathbb{I}+sz_3\mathbb{I}+s\sigma_3)$ and $\mathrm{tr}_{A_3A_4}(\rho_1) = \frac{1}{2(1-sz_3)}V_{A_1}\Pi_1V_{A_1}^\dagger\otimes(\mathbb{I}-sz_3\mathbb{I}+s\sigma_3)$.
The average entropy entropy of subsystem $A_2$ after measurement on subsystem $A_1$ is given by
\begin{equation}
\begin{split}\label{A4Gz3}
S_{A_2|\Pi^{A_1}}(\rho) =& -\sum_{j=0}^1p_j(\lambda_j^+\log_2\lambda_j^++\lambda_j^-\log_2\lambda_j^-) \\
=& 1-\frac{1}{4}H(sz_3+s)-\frac{1}{4}H(sz_3-s)+\frac{1}{2}H(sz_3)\\
:=&1-G(z_3).
\end{split}
\end{equation}

To evaluate $S_{A_3|\Pi^{A_1A_2}}(\rho)$, one needs to measure the subsystem $A_2$ under the conditional measurement outcome of $A_1$.
We have
\begin{equation}
\begin{split}
\rho_{00} &= \frac{1}{4[1+sz_3+sz_3^{(0)}]}V_{A_1}\Pi_0V_{A_1}^\dagger\otimes
V_{A_2}^{(0)}\Pi_0[V_{A_2}^{(0)}]^\dagger\otimes[\mathbb{I}\otimes\mathbb{I}\\
&+\sum_i^3c_iz_iz_i^{(0)}\sigma_i\otimes\sigma_i+sz_3\mathbb{I}\otimes\mathbb{I}+sz_3^{(0)} \mathbb{I}\otimes\mathbb{I}+s\sigma_3\otimes\mathbb{I}+s\mathbb{I}\otimes\sigma_3 ],
\end{split}
\end{equation}
\begin{equation}
\begin{split}
\rho_{01} &= \frac{1}{4[1+sz_3-sz_3^{(0)}]}V_{A_1}\Pi_1V_{A_1}^\dagger\otimes V_{A_2}^{(0)}\Pi_1[V_{A_2}^{(0)}]^\dagger\otimes[\mathbb{I}\otimes\mathbb{I}\\
&-\sum_i^3c_iz_iz_i^{(0)}\sigma_i\otimes\sigma_i+sz_3\mathbb{I}\otimes\mathbb{I}-sz_3^{(0)} \mathbb{I}\otimes\mathbb{I}+s\sigma_3\otimes\mathbb{I}+s\mathbb{I}\otimes\sigma_3 ],
\end{split}
\end{equation}
\begin{equation}
\begin{split}
\rho_{10} &= \frac{1}{4[1-sz_3+sz_3^{(1)}]}V_{A_1}\Pi_1V_{A_1}^\dagger\otimes {V_{A_2}^{(1)}\Pi_0[V_{A_2}^{(1)}]^\dagger}\otimes[\mathbb{I}\otimes\mathbb{I}\\
&-\sum_i^3c_iz_iz_i^{(1)}\sigma_i\otimes\sigma_i-sz_3\mathbb{I}\otimes\mathbb{I}+sz_3^{(1)} \mathbb{I}\otimes\mathbb{I}+s\sigma_3\otimes\mathbb{I}+s\mathbb{I}\otimes\sigma_3 ],
\end{split}
\end{equation}
\begin{equation}
\begin{split}
\rho_{11} &= \frac{1}{4[1-sz_3-sz_3^{(1)}]}V_{A_1}\Pi_1V_{A_1}^\dagger\otimes {V_{A_2}^{(1)}\Pi_1[V_{A_2}^{(1)}]^\dagger}\otimes[\mathbb{I}\otimes\mathbb{I}\\
&+\sum_i^3c_iz_iz_i^{(1)}\sigma_i\otimes\sigma_i-sz_3\mathbb{I}\otimes\mathbb{I}-sz_3^{(1)} \mathbb{I}\otimes\mathbb{I}+s\sigma_3\otimes\mathbb{I}+s\mathbb{I}\otimes\sigma_3 ]
\end{split}
\end{equation}
and $p_{jk} = \frac{1}{4}(1+(-1)^{j}sz_3+(-1)^ksz_3^{(j)})$
with
\begin{align*}
z_1^{(0)}&=2[-t_{A_1}^{(0)}y_{{A_1}2}^{(0)}+y_{{A_1}1}^{(0)}y_{{A_1}3}^{(0)}],~~
z_2^{(0)}=2[t_{A_1}^{(0)}y_{{A_1}1}^{(0)}+y_{{A_1}2}^{(0)}y_{{A_1}3}^{(0)}],\\
z_3^{(0)}&=[t_{A_1}^{(0)}]^2+[y_{{A_1}3}^{(0)}]^2-[y_{{A_1}1}^{(0)}]^2-[y_{{A_1}2}^{(0)}]^2,\\
z_1^{(1)}&=2[-t_{A_1}^{(1)}y_{{A_1}2}^{(1)}+y_{{A_1}1}^{(1)}y_{{A_1}3}^{(1)}],~~
z_2^{(1)}=2[t_{A_1}^{(1)}y_{{A_1}1}^{(1)}+y_{{A_1}2}^{(1)}y_{{A_1}3}^{(1)}],\\
z_3^{(1)}&=[t_{A_1}^{(1)}]^2+[y_{{A_1}3}^{(1)}]^2-[y_{{A_1}1}^{(1)}]^2-[y_{{A_1}2}^{(1)}]^2,
\end{align*}
where $j$ in $\{V_{A_2}^{(j)},j=0,1\}$ is the outcome of the measurement of the subsystem $A_1$.
$\{V_{A_2}^{(j)} = t_{A_2}^{(j)}\mathbb{I}+\sqrt{-1}\vec{y}_{A_2}^{(j)}\cdot\vec{\sigma}, j=0,1\}$ is the Bloch expansion of unitary operators with $t_{A_2}^{(j)}\in \mathbb{R},~ \vec{y}_{A_2}^{(j)} = [y_{{A_2}1}^{(j)},y_{{A_2}2}^{(j)},y_{{A_2}3}^{(j)}]\in \mathbb{R}^3$ and $[y_{{A_2}1}^{(j)}]^2+[y_{{A_2}2}^{(j)}]^2+[y_{{A_2}3}^{(j)}]^2+[t_{A_2}^{(j)}]^2=1$.

Thus, we have
\begin{align*}
\mathrm{tr}_{A_4}(\rho_{jk})&=\frac{1}{2\big[1+(-1)^jsz_3+(-1)^ksz_3^{(j)}\big]}V_{A_1}\Pi_jV_{A_1}^\dagger\otimes V_{A_2}^{(j)}\Pi_k[V_{A_2}^{(j)}]^\dagger\Big[\mathbb{I}\\
&+(-1)^jsz_3\mathbb{I}+(-1)^ksz_3^{(j)}\mathbb{I}+s\sigma_3\Big]
\end{align*}
and the state $\rho_{\Pi^{A_1A_2}}$ is given by $\rho_{\Pi^{A_1A_2}} = \sum_{jk}p_{jk}\rho_{jk}$.
The average entropy of the subsystem $A_3$ after the measurement $P^{A_1A_2}_{jk}$ is given by
\begin{equation}
\begin{split}\label{A4Fz}
S_{A_3|\Pi^{A_1A_2}}(\rho) &= -\sum_{j=0}^1\sum_{k=0}^1p_{jk}(\lambda_{jk}^+\log_2\lambda_{jk}^++\lambda_{jk}^-\log_2\lambda_{jk}^-) \\
&=1-\frac{1}{8}\Big[H_{sz_3+sz_3^{(0)}}(s)+H_{sz_3-sz_3^{(0)}}(s)+H_{-sz_3+sz_3^{(1)}}(s)\\
&\ \ +H_{-sz_3-sz_3^{(1)}}(s)-2H_{sz_3}(sz_3^{(0)})-2H_{-sz_3}(sz_3^{(1)})\Big]\\
&=:1-F(z_3,z_3^{(0)},z_3^{(1)}).
\end{split}
\end{equation}

To evaluate $S_{A_4|\Pi^{A_1A_2A_3}}(\rho)$, we need to measure the subsystem $A_3$ under the conditions of the outcomes of measuring $A_1$ and $A_2$. We can get $(j,k,l=0,1)$
\begin{equation}
\begin{split}
\rho_{jkl} =& \frac{1}{2\big[1+(-1)^jsz_3+(-1)^ksz_3^{(j)}+(-1)^lsz_3^{(jk)}\big]}V_{A_1}\Pi_jV_{A_1}^\dagger\otimes V_{A_2}^{(j)}\Pi_k(V_{A_2}^{(j)})^\dagger\\
&\otimes V_{A_3}^{(jk)}\Pi_l(V_{A_3}^{(jk)})^\dagger\otimes\Big[\mathbb{I}+(-1)^{j+k+l}\sum_i^3c_iz_iz_i^{(j)}z_i^{(jk)}
\sigma_i+(-1)^jsz_3\mathbb{I}\\
&+(-1)^ksz_3^{(j)} \mathbb{I}+(-1)^lsz_3^{(jk)}\mathbb{I}+s\sigma_3 \Big],
\end{split}
\end{equation}
and $p_{jkl} = \frac{1}{8}(1+(-1)^jsz_3+(-1)^ksz_3^{(j)}+(-1)^lsz_3^{(jk)})$,
where the $j, k$ in the unitary $\{V_{A_3}^{(jk)}, j, k = 0,1 \}$ are the outcome of the measurement of $A_1$ and $A_2$, respectively. And $V_{A_3}^{(jk)}$ can be written as a Bloch expansion $V_{A_3}^{(jk)} = t_{A_3}^{(jk)}\mathbb{I}+\sqrt{-1}\vec{y}_{A_3}^{(jk)}\cdot\vec{\sigma}, j,k=0,1\}$ with $t_{A_3}^{(jk)}\in \mathbb{R},~ \vec{y}_{A_3}^{(jk)} = [y_{{A_3}1}^{(jk)},y_{{A_3}2}^{(jk)},y_{{A_3}3}^{(jk)}]\in \mathbb{R}^3$, $[y_{{A_3}1}^{(jk)}]^2+[y_{{A_3}2}^{(jk)}]^2+[y_{{A_3}3}^{(jk)}]^2+[t_{A_3}^{(jk)}]^2=1$ and
\begin{equation}
\begin{split}
z_1^{(jk)}&=2[-t_{A_1}^{(jk)}y_{{A_1}2}^{(jk)}+y_{{A_1}1}^{(jk)}y_{{A_1}3}^{(jk)}],~~
z_2^{(jk)}=2[t_{A_1}^{(jk)}y_{{A_1}1}^{(jk)}+y_{{A_1}2}^{(jk)}y_{{A_1}3}^{(jk)}],\\
z_3^{(jk)}&=[t_{A_1}^{(jk)}]^2+[y_{{A_1}3}^{(jk)}]^2-[y_{{A_1}1}^{(jk)}]^2-[y_{{A_1}2}^{(jk)}]^2.
\end{split}
\end{equation}
Its nonzero eigenvalues are $(j,k,l=0,1)$
\begin{equation}
\begin{split}
\lambda_{jkl}^{\pm} &= \frac{1+(-1)^jsz_3+(-1)^ksz_3^{(j)}+(-1)^lsz_3^{(jk)}\pm \alpha_{jkl}}{2[1+(-1)^jsz_3+(-1)^ksz_3^{(j)}+(-1)^lsz_3^{(jk)}]}
\end{split}
\end{equation}
with
\begin{equation}
\begin{split}
\alpha_{jkl} =& \bigg[(c_1z_1z_1^{(j)}z_1^{(jk)})^2+(c_2z_2z_2^{(j)}z_2^{(jk)})^2
+(c_3z_3z_3^{(j)}z_3^{(jk)})^2\\
&+(-1)^{j+k+l}2c_3z_3z_3^{(j)}z_3^{(jk)}+s^2\bigg]^{\frac{1}{2}}.
\end{split}
\end{equation}
We have
\begin{equation}
\begin{split}\label{A4Tz}
&S_{A_4|\Pi^{A_1A_2A_3}}(\rho)=
 1-\frac{1}{16}\sum_{jklm}\Big\{[1+(-1)^jsz_3+(-1)^ksz_3^{(j)}+(-1)^lsz_3^{(jk)}\\
&+(-1)^m\alpha_{jkl}]\log_2[1+(-1)^jsz_3+(-1)^ksz_3^{(j)}+(-1)^lsz_3^{(jk)}
+(-1)^m\alpha_{jkl}\\
&-2[1+(-1)^jsz_3+(-1)^ksz_3^{(j)}+(-1)^lsz_3^{(jk)}]\log_2[1+(-1)^jsz_3+(-1)^ksz_3^{(j)}\\
&+(-1)^lsz_3^{(jk)}]\Big\}=:1-T,
\end{split}
\end{equation}
where $T$ is a function with respect to $z_3,z_3^{(0)},z_3^{(1)},z_3^{(00)},z_3^{(01)},z_3^{(10)},z_3^{(11)},\alpha_{jkl} \\(j,k,l=0,1)$.

By \eqref{A1.2}, \eqref{A4Fz}, \eqref{A4Gz3} and \eqref{A4Tz}, we can obtain the four-partite quantum discord
\begin{equation}
\begin{split}
&\mathcal{Q}_{A_1,A_2,A_3,A_4}(\rho)\\
&=\min_{\Pi^{A_1 A_2A_3}}[S_{A_2|\Pi^{A_1}}(\rho)+S_{A_3|\Pi^{A_1 A_{2}}}(\rho)+S_{A_4|\Pi^{A_1 A_2A_3}}(\rho)-S_{A_2 A_3|A_1}(\rho)]\\
&=\sum_i\lambda_i\log_2\lambda_i+4-\frac{1}{2}H(s)-\max{\{G+ F+T\}}\\
&=:\sum_i\lambda_i\log_2\lambda_i+4-\frac{1}{2}H(s)-\max{Y}.
\end{split}
\end{equation}

\begin{theorem}
For the family of four-qubit state in \eqref{Afourdef}, the quantum discord is evaluated in two regions.

Case 1 : if $c_3\leq 0$ and $c_3^2\geq c^2$ or $\frac{s^2}{(1-2|s|)}\geq \frac{(c_3^2-c^2)}{c_3}$, then we have
\begin{equation}
\begin{split}
\mathcal{Q}_{A_1,A_2,A_3,A_4}(\rho)&=\sum_i\lambda_i\log_2\lambda_i+4-\frac{1}{16}\big[H_{3s}(|s+c_3|)+H_{-3s}(|s-c_3|)\\
&+3H_{s}(|s-c_3|)+3H_{-s}(|s+c_3|)\big].
\end{split}
\end{equation}

Case 2 : if $s=0$, set $C=\max{\{|c_1|,|c_2|,|c_3|\}}$, then
\textcolor{red}{\begin{equation}\label{Athm2case2}
\begin{split}
\mathcal{Q}_{A_1,A_2,A_3,A_4}(\rho)=&\frac{1}{4}\Big[(1+c_1+c_2+c_3)\log_2(1+c_1+c_2+c_3)\\
&+(1+c_1-c_2-c_3)\log_2(1+c_1-c_2-c_3)\\
&+(1-c_1+c_2-c_3)\log_2(1-c_1+c_2-c_3)\\
&+(1-c_1-c_2+c_3)\log_2(1-c_1-c_2+c_3)
\Big]\\
&-\frac{1}{2}H(C).
\end{split}\end{equation}}
\end{theorem}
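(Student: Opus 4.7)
The strategy is to extend the nested sequential-maximization argument of Theorem~\ref{AThm1} by one additional measurement layer.  Decompose $Y = G(z_3) + F(z_3,z_3^{(0)},z_3^{(1)}) + T$ and observe that the three Bloch vectors $\vec z$, $\vec z^{(j)}$, $\vec z^{(jk)}$ are chosen independently across the three measurement stages.  The function $G$ is identical to its three-qubit counterpart; $F$ has the same structural shape as in the three-qubit case but with every $H_y$-argument reduced to just $s$ (no $c_i$-dependence at this layer, since the $c_i$-contributions have all moved to the innermost stage), while $T$ carries the full $c_i$-dependence through $\alpha_{jkl}$.

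For Case~1, I would first replicate the $\varphi$-monotonicity step of the three-qubit proof.  Define $\varphi_{jk} = \sum_{i=1}^{3} (c_i z_i z_i^{(j)} z_i^{(jk)})^2$; the same $\frac{1}{x}\log_2\frac{1+x}{1-x}$-monotonicity that gave $\partial Y/\partial \varphi_0 \ge 0$ yields $\partial T/\partial \varphi_{jk} \ge 0$, so the transverse piece $(c_1 z_1 z_1^{(j)} z_1^{(jk)})^2 + (c_2 z_2 z_2^{(j)} z_2^{(jk)})^2$ may be replaced by the upper bound $c^2[1-(z_3 z_3^{(j)} z_3^{(jk)})^2]$ with $c = \max\{|c_1|,|c_2|\}$ (using $\sum_i z_i^2 = 1$), the four bounds saturating simultaneously by axis alignment.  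Next $G$ is maximized at $z_3 = 1$ (unchanged argument), $F$ is maximized at $z_3^{(j)} = 1$ by a simpler monotonicity (no $c_i$ appears), and $\partial T/\partial z_3^{(jk)}$ is computed with $z_3 = z_3^{(j)} = 1$ already substituted.  This derivative has exactly the same shape as \eqref{AFDZ20}--\eqref{AscaleII}; the case split on $c_3 \le 0,\ c_3^2 \ge c^2$ versus the alternative $s^2/(1-2|s|) \ge (c_3^2-c^2)/c_3$ reproduces the three-qubit sign analysis, the new threshold $(1-2|s|)$ arising because the denominators now contain three $z_3$-type variables rather than two.  Substituting $z_3 = z_3^{(j)} = z_3^{(jk)} = 1$ gives $\alpha_{jkl} = |s + (-1)^{j+k+l}c_3|$; summing over the eight triples, with $y_{jkl} \in \{3s, s, -s, -3s\}$ in multiplicities $1{:}3{:}3{:}1$ distributed by the parity of $j+k+l$, assembles the announced four $H_{\pm 3s}$, $H_{\pm s}$ terms.

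For Case~2 ($s=0$), the key observation is that $\sigma_1^{\otimes 4}$, $\sigma_2^{\otimes 4}$, $\sigma_3^{\otimes 4}$ pairwise commute: $(\sigma_i\sigma_j)^{\otimes 4} = (\pm i\sigma_k)^{\otimes 4} = \sigma_k^{\otimes 4}$.  Their joint spectrum is indexed by $\vec\epsilon \in \{\pm 1\}^3$ subject to $\epsilon_1\epsilon_2\epsilon_3 = 1$, producing four distinct eigenvalues $\lambda_{\vec\epsilon} = \frac{1}{16}(1+\vec\epsilon\cdot\vec c)$, each of multiplicity~$4$.  Since $\sum_{\vec\epsilon}(1+\vec\epsilon\cdot\vec c) = 4$, a direct expansion turns $\sum_i \lambda_i \log_2 \lambda_i + 4$ into exactly the four-term sum appearing in \eqref{Athm2case2}.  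Substituting $s=0$ kills $G$ and $F$ identically, and $\alpha_{jkl}$ loses its $l$-dependence to become $\alpha_{jk} = \sqrt{\sum_i (c_i z_i z_i^{(j)} z_i^{(jk)})^2}$, whence $T = \tfrac{1}{8}\sum_{j,k=0}^{1} H(\alpha_{jk})$.  The bound $\alpha_{jk}^2 \le C^2 \sum_i (z_i z_i^{(j)} z_i^{(jk)})^2 \le C^2$ follows from $\sum_i z_i^2 = 1$ together with $(z_i^{(j)})^2,\ (z_i^{(jk)})^2 \le 1$, and is tight when every Bloch vector is aligned with the axis achieving $C$; strict monotonicity of $H$ on $[0,1]$ then gives $\max T = \tfrac{1}{2}H(C)$, and combining yields \eqref{Athm2case2}.

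The principal obstacle is the Case~1 partial-derivative bookkeeping for $T$: the three-qubit analogue \eqref{AFDZ20}--\eqref{AscaleII} already required a delicate sign analysis of $B_+^2 - B_-^2$, and the additional measurement layer forces three $z_3$-type variables into each denominator, producing both the shifted threshold $(1-2|s|)$ and the flipped inequality direction in the stated hypothesis.  No new conceptual machinery is required beyond that lengthy but mechanical case analysis.
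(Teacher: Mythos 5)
Your proposal follows essentially the same route as the paper: the decomposition $Y=G+F+T$, sequential maximization justified by independence of the Bloch vectors at each measurement stage, the replacement of the transverse part of $\varphi_{jk}$ by $c^2\bigl[1-(z_3z_3^{(j)}z_3^{(jk)})^2\bigr]$, the sign analysis of $\partial T/\partial z_3^{(jk)}$ patterned on \eqref{AFDZ20}--\eqref{AscaleII} with $U_{jk}=1+(-1)^js+(-1)^ks$ producing the $1-2|s|$ threshold, the evaluation at $z_3=z_3^{(j)}=z_3^{(jk)}=1$ with the $1{:}3{:}3{:}1$ multiplicities, and the bound $\alpha_{jkl}\le C$ for $s=0$. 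The only difference is that you additionally spell out the joint spectrum of $\sigma_1^{\otimes 4},\sigma_2^{\otimes 4},\sigma_3^{\otimes 4}$ to convert $\sum_i\lambda_i\log_2\lambda_i+4$ into the explicit four-term expression of \eqref{Athm2case2}, a detail the paper leaves implicit; this is a correct and welcome supplement rather than a different method.
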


\begin{proof}

The proof is similar to that Theorem \ref{AThm1}. First, since
$\vec{z}=(z_1, z_2, z_3)$, $\vec{z}^{(0)}=(z_1^{(0)},z_2^{(0)},z_3^{(0)})$, $\vec{z}^{(1)}=(z_1^{(1)},z_2^{(1)},z_3^{(1)})$ $\vec{z}^{(00)}=(z_1^{(00)},z_2^{(00)},z_3^{(00)})$, $\vec{z}^{(01)}=(z_1^{(01)},z_2^{(01)},z_3^{(01)})$ $\vec{z}^{(10)}=(z_1^{(10)},z_2^{(10)},$ $z_3^{(10)})$ and $\vec{z}^{(11)}=(z_1^{(11)},z_2^{(11)},z_3^{(11)})$ are independent of each other, the maximum value of $T$ is obtained when both functions $G$ and $F$ take the maximum value.
Thus, $\max Y = \frac{1}{8}H(3s)-\frac{1}{8}H(s)+\max{T}$.
Meanwhile, similar to the proof of Theorem \ref{AThm1}, we can reduce this optimization problem to that of a four-variable function as follows.
\begin{equation}
\begin{split}
&T(z_3^{(00)},z_3^{(01)},z_3^{(10)},z_3^{(11)})=\frac{1}{16}\sum_{jklm}\Big[1+(-1)^js+(-1)^ks+(-1)^lsz_3^{(jk)}\\
&+(-1)^m\sqrt{s^2+(-1)^{j+k+l}c_3z_3^{(jk)}+\beta_{jk}}\Big]\log_2\Big[1+(-1)^js+(-1)^ks+(-1)^lsz_3^{(jk)}\\
&+(-1)^m\sqrt{s^2+(-1)^{j+k+l}c_3z_3^{(jk)}+\beta_{jk}}\Big]-\frac{1}{8}\sum_{jkl}\Big[1+(-1)^js+(-1)^ks\\
&+(-1)^lsz_3^{(jk)}]\log_2[1+(-1)^js+(-1)^ks+(-1)^lsz_3^{(jk)}\Big],
\end{split}
\end{equation}
where $\beta_{jk} = c^2\big[1-(z_3^{(jk)})^2\big]+(c_3z_3^{(jk)})^2, ~(j,k=0,1)$.

Taking the derivative of $T$ over $z_3^{(00)}$, $z_3^{(01)}$, $z_3^{(10)}$ and $z_3^{(11)}$, respectively. We can get that $(j,k=0,1)$
{\scriptsize
\begin{equation}\label{ATDZjk3}
\begin{split}
\frac{\partial T}{\partial z_3^{(jk)}}=&\frac{1}{16}\Bigg[s\log_2\frac{\left(U_{jk}+sz_3^{(jk)}-Q_{jk0}\right)\left(U_{jk}+sz_3^{(jk)}
+Q_{jk0}\right)\left(U_{jk}-sz_3^{(jk)}\right)^2}{\left(U_{jk}+sz_3^{(jk)}\right)^2\left(U_{jk}-sz_3^{(jk)}
-Q_{jk1}\right)\left(U_{jk}-sz_3^{(jk)}+Q_{jk1}\right)}\\
&+\frac{sc_3-c^2z_3^{(jk)}+c_3^2z_3^{(jk)}}{Q_{jk0}}\log_2\frac{U_{jk}+sz_3^{(jk)}+Q_{jk0}}{U_{jk}+sz_3^{(jk)}-Q_{jk0}}
\\&+\frac{-sc_3-c^2z_3^{(jk)}+c_3^2z_3^{(jk)}}{Q_{jk1}}\log_2\frac{U_{jk}-sz_3^{(jk)}+Q_{jk1}}{U_{jk}-sz_3^{(jk)}-Q_{jk1}}\Bigg]\\
=&\frac{1}{16}\Bigg[\frac{sc_3-c^2z_3^{(jk)}+c_3^2z_3^{(jk)}}{U_{jk}+sz_3^{(jk)}}\frac{1}{E_{jk0}}\log_2\frac{1+E_{jk0}}{1-E_{jk0}}
\\
&+\frac{-sc_3-c^2z_3^{(jk)}+c_3^2z_3^{(jk)}}{U_{jk}-sz_3^{(jk)}}\frac{1}{E_{jk1}}\log_2\frac{1+E_{jk1}}{1-E_{jk1}}
+s\log_2\frac{1-(E_{jk0})^2}{1-(E_{jk1})^2}\Bigg],
\end{split}
\end{equation}}
where $Q_{jkl} = \sqrt{\left(s+(-1)^{j+k+l} c_3z_3^{(jk)}\right)^2+c^2\left(1-(z_3^{(jk)})^2\right)}$, $U_{jk}=1+(-1)^js+(-1)^ks$ and $E_{jkl} = \frac{Q_{jkl}}{\left(1+s+(-1)^l sz_3^{(jk)}\right)}\in [0,1]$, $(j,k,l=0,1)$.
It is not difficult to see that the derivative of $T$ over $z_3^{(jk)}$ is similar to \eqref{AFDZ20}, so a similar discussion can be conducted.

Case 1 : Since
{\small
\begin{equation}
\begin{split}
&(E_{jk0})^2-(E_{jk1})^2 =\\ &\frac{4sz_3^{(jk)}\left\{c_3\left[(U_{jk})^2+s^2z_3^{(jk)}\right]-U_{jk}\left[s^2+c_3^2(z_3^{(jk)})^2
+c^2\left(1-(z_3^{(jk)})^2\right)\right]\right\}}{\left(U_{jk}-sz_3^{(jk)}\right)^2\left(U_{jk}+sz_3^{(jk)}\right)^2}
\end{split}
\end{equation}}
we see that the last term $s\log_2\frac{1-(E_{jk0})^2}{1-(E_{jk1})^2}\geq 0$ of \eqref{ATDZjk3}) iff $s\left[(E_{jk0})^2-(E_{jk1})^2\right]\leq 0$, which holds on $c_3\leq 0$.
Since function $g(x) = \frac{1}{x}\log_2\frac{1+x}{1-x}$ is strictly increasing on $(0,1)$, we have that

(i) If $s\geq 0$, $c_3\leq 0$ and $c_3^2\geq c^2$, then $E_{jk0}\leq E_{jk1}$. It implies that $\frac{1}{E_{jk0}}\log_2\frac{1+E_{jk0}}{1-E_{jk0}}\leq \frac{1}{E_{jk1}}\log_2\frac{1+E_{jk1}}{1-E_{jk1}}$ and
{\scriptsize
\begin{equation}
\begin{split}
\frac{\partial T}{\partial z_3^{(jk)}}&\geq\frac{1}{16}\left[\frac{sc_3-c^2z_3^{(jk)}+c_3^2z_3^{(jk)}}{\left(U_{jk}+sz_3^{(jk)}\right)
E_{jk0}}\log_2\frac{1+E_{jk0}}{1-E_{jk0}}+\frac{-sc_3-c^2z_3^{(jk)}+c_3^2z_3^{(jk)}}{\left(U_{jk}-sz_3^{(jk)}\right)
E_{jk1}}\log_2\frac{1+E_{jk1}}{1-E_{jk1}}\right]\\
&\geq\frac{1}{16}\left[\frac{sc_3-c^2z_3^{(jk)}+c_3^2z_3^{(jk)}}{\left(U_{jk}+sz_3^{(jk)}\right)
E_{jk0}}\log_2\frac{1+E_{jk0}}{1-E_{jk0}}+\frac{-sc_3-c^2z_3^{(jk)}+c_3^2z_3^{(jk)}}{\left(U_{jk}-sz_3^{(jk)}\right)
E_{jk0}}\log_2\frac{1+E_{jk0}}{1-E_{jk0}}\right]\\
&\geq\frac{1}{16}\left[\frac{sc_3-c^2z_3^{(jk)}+c_3^2z_3^{(jk)}}{\left(U_{jk}+sz_3^{(jk)}\right)
E_{jk0}}\log_2\frac{1+E_{jk0}}{1-E_{jk0}}+\frac{-sc_3-c^2z_3^{(jk)}+c_3^2z_3^{(jk)}}{\left(U_{jk}+sz_3^{(jk)}\right)
E_{jk0}}\log_2\frac{1+E_{jk0}}{1-E_{jk0}}\right]\\
&=\frac{1}{8}\left[\frac{(c_3^2-c^2)z_3^{(jk)}}{U_{jk}+sz_3^{(jk)}}\frac{1}{E_{jk0}}\log_2\frac{1+E_{jk0}}{1-E_{jk0}}\right]\\
&\geq0.
\end{split}
\end{equation}}

(ii) If $s\geq 0$, $c_3\leq 0$ and $\frac{s^2}{(1-2|s|)}\geq \frac{(c_3^2-c^2)}{c_3}$, then
{\scriptsize
\begin{equation}
\begin{split}
\frac{\partial T}{\partial z_3^{(jk)}}&\geq\frac{1}{16}\left[\frac{sc_3-c^2z_3^{(jk)}+c_3^2z_3^{(jk)}}{\left(U_{jk}+sz_3^{(jk)}\right)
E_{jk0}}\log_2\frac{1+E_{jk0}}{1-E_{jk0}}+\frac{-sc_3-c^2z_3^{(jk)}+c_3^2z_3^{(jk)}}{\left(U_{jk}-sz_3^{(jk)}\right)
E_{jk1}}\log_2\frac{1+E_{jk1}}{1-E_{jk1}}\right]\\
&\geq\frac{1}{16}\left[\frac{sc_3-c^2z_3^{(jk)}+c_3^2z_3^{(jk)}}{\left(U_{jk}+sz_3^{(jk)}\right)
E_{jk0}}\log_2\frac{1+E_{jk0}}{1-E_{jk0}}+\frac{-sc_3-c^2z_3^{(jk)}+c_3^2z_3^{(jk)}}{\left(U_{jk}-sz_3^{(jk)}\right)
E_{jk0}}\log_2\frac{1+E_{jk0}}{1-E_{jk0}}\right]\\
&=\frac{1}{16}\left[\frac{2z_3^{(jk)}\left(U_{jk}(c_3^2-c^2)-s^2c_3\right)}{\left(U_{jk}+sz_3^{(jk)}\right)
\left(U_{jk}-sz_3^{(jk)}\right)}\frac{1}{E_{jk0}}\log_2\frac{1+E_{jk0}}{1-E_{jk0}}\right]\\
&\geq0.\\
\end{split}
\end{equation}}

(iii) If $s\leq 0$, $c_3\leq 0$ and $c_3^2\geq c^2$, then $E_{jk0}\geq E_{jk1}$ and we have
{\scriptsize
\begin{equation}
\begin{split}
\frac{\partial T}{\partial z_3^{(jk)}}&\geq\frac{1}{16}\left[\frac{sc_3-c^2z_3^{(jk)}+c_3^2z_3^{(jk)}}{\left(U_{jk}+sz_3^{(jk)}\right)
E_{jk0}}\log_2\frac{1+E_{jk0}}{1-E_{jk0}}+\frac{-sc_3-c^2z_3^{(jk)}+c_3^2z_3^{(jk)}}{\left(U_{jk}-sz_3^{(jk)}\right)
E_{jk1}}\log_2\frac{1+E_{jk1}}{1-E_{jk1}}\right]\\
&\geq\frac{1}{16}\left[\frac{sc_3-c^2z_3^{(jk)}+c_3^2z_3^{(jk)}}{\left(U_{jk}+sz_3^{(jk)}\right)
E_{jk1}}\log_2\frac{1+E_{jk1}}{1-E_{jk1}}+\frac{-sc_3-c^2z_3^{(jk)}+c_3^2z_3^{(jk)}}{\left(U_{jk}-sz_3^{(jk)}\right)
E_{jk1}}\log_2\frac{1+E_{jk1}}{1-E_{jk1}}\right]\\
&\geq\frac{1}{16}\left[\frac{sc_3-c^2z_3^{(jk)}+c_3^2z_3^{(jk)}}{\left(U_{jk}-sz_3^{(jk)}\right)
E_{jk1}}\log_2\frac{1+E_{jk1}}{1-E_{jk1}}+\frac{-sc_3-c^2z_3^{(jk)}+c_3^2z_3^{(jk)}}{\left(U_{jk}-sz_3^{(jk)}\right)
E_{jk1}}\log_2\frac{1+E_{jk1}}{1-E_{jk1}}\right]\\
&=\frac{1}{8}\left[\frac{(c_3^2-c^2)z_3^{(jk)}}{\left(U_{jk}-sz_3^{(jk)}\right)
E_{jk1}}\log_2\frac{1+E_{jk1}}{1-E_{jk1}}\right]\\
&\geq0.
\end{split}
\end{equation}}

(iv)If $s\leq 0$, $c_3\leq 0$ and $\frac{s^2}{(1-2|s|)}\geq \frac{(c_3^2-c^2)}{c_3}$, then
{\scriptsize
\begin{equation}
\begin{split}
\frac{\partial T}{\partial z_3^{(jk)}}&\geq\frac{1}{16}\left[\frac{sc_3-c^2z_3^{(jk)}+c_3^2z_3^{(jk)}}{\left(U_{jk}+sz_3^{(jk)}\right)
E_{jk0}}\log_2\frac{1+E_{jk0}}{1-E_{jk0}}+\frac{-sc_3-c^2z_3^{(jk)}+c_3^2z_3^{(jk)}}{\left(U_{jk}-sz_3^{(jk)}\right)
E_{jk1}}\log_2\frac{1+E_{jk1}}{1-E_{jk1}}\right]\\
&\geq\frac{1}{16}\left[\frac{sc_3-c^2z_3^{(jk)}+c_3^2z_3^{(jk)}}{\left(U_{jk}+sz_3^{(jk)}\right)
E_{jk1}}\log_2\frac{1+E_{jk1}}{1-E_{jk1}}+\frac{-sc_3-c^2z_3^{(jk)}+c_3^2z_3^{(jk)}}{\left(U_{jk}-sz_3^{(jk)}\right)
E_{jk1}}\log_2\frac{1+E_{jk1}}{1-E_{jk1}}\right]\\
&=\frac{1}{8}\left[\frac{z_3^{(jk)}\left(U_{jk}(c_3^2-c^2)-s^2c_3\right)}{\left(U_{jk}+sz_3^{(jk)}\right)
\left(U_{jk}-sz_3^{(jk)}\right)}\frac{1}{E_{jk1}}\log_2\frac{1+E_{jk1}}{1-E_{jk1}}\right]\\
&\geq0.
\end{split}
\end{equation}}

Then the maximum of $T(z_3^{(00)},z_3^{(01)},z_3^{(10)},z_3^{(11)})$ is $T(1,1,1,1)$, and
\begin{equation}
\begin{split}\max{Y} =& \frac{1}{16}[H_{3s}(|s+c_3|)+H_{-3s}(|s-c_3|)+3H_{s}(|s-c_3|)\\
&+3H_{-s}(|s+c_3|)]-\frac{1}{2}H(s)
.\end{split}
\end{equation}

Case 2 :
if $s=0$, then
\begin{equation}
\begin{split}
&Y=\frac{1}{16}\sum_{j=0}^1\sum_{k=0}^1\sum_{l=0}^1H(\alpha_{jkl}).
\end{split}
\end{equation}
Set $C=\max{\{|c_1|, |c_2|, |c_3|\}}$, we have
$$\alpha_{jkl} = \sqrt{(c_1z_1z_1^{(j)}z_1^{(jk)})^2+(c_2z_2z_2^{(j)}z_2^{(jk)})^2+(c_3z_3z_3^{(j)}z_3^{(jk)})^2}\leq C.$$
Note that $H_y(x)$ is an even function of $x$, and is strictly increasing on $[0, 1]$,
thus we have
\begin{equation}
\begin{split}
\max{Y}=\frac{1}{2}H(C).
\end{split}
\end{equation}

\end{proof}

From the results of three-qubit and four-quibt states, we obtain the following conclusion for the general N-qubit
case.
\begin{theorem}\label{AThmN}
For the family of N-qubit states in \eqref{A2.1}, the quantum discord is given as follows.

Case 1: if $c_3\leq 0$ , $c^2\leq c_3^2$ or $\frac{s^2}{1-(N-2)|s|}\geq \frac{c_3^2-c^2}{c_3}$, then
\begin{equation}
\begin{split}
\mathcal{Q}_{A_1,A_2,\ldots,A_N}(\rho)&=\sum_i\lambda_i\log_2\lambda_i+N-\max{W},
\end{split}
\end{equation}
where
$\max{W}$ is computed according to the following cases.\\
(i) if $N=4n$, $n\in\mathbb{N}^+$, then
{\small
\begin{equation}\label{A4n}
\begin{split}
\max{W} &= \frac{1}{2^{4n}}\left[\sum_{k=0}^{2n-1}\tbinom{4n-1}{2k}H_{(4n-4k-1)s}(|s+c_3|)+\sum_{k=0}^{2n-1}\tbinom{4n-1}{2k+1}H_{(4n-4k-3)s}(|s-c_3|)\right];
\end{split}
\end{equation}}
(ii) if $N=4n+1$, $n\in\mathbb{N}^+$, then
{\small
\begin{equation}\label{A4n+1}
\begin{split}
\max{W} &= \frac{1}{2^{4n+1}}\left[\sum_{k=0}^{2n}\tbinom{4n}{2k}H_{(4n-4k)s}(|s+c_3|)+\sum_{k=0}^{2n-1}\tbinom{4n}{2k+1}H_{(4n-4k-2)s}(|s-c_3|)\right];
\end{split}
\end{equation}}
(iii) if $N=4n+2$, $n\in\mathbb{N}^+$, then
{\small
\begin{equation}\label{A4n+2}
\begin{split}
\max{W} &= \frac{1}{2^{4n+2}}\left[\sum_{k=0}^{2n}\tbinom{4n+1}{2k}H_{(4n-4k+1)s}(|s+c_3|)+\sum_{k=0}^{2n}\tbinom{4n+1}{2k+1}H_{(4n-4k-1)s}(|s-c_3|)\right];
\end{split}
\end{equation}}
(iv) if $N=4n+3$, $n\in\mathbb{N}^+$, then
{\small
\begin{equation}\label{A4n+3}
\begin{split}
\max{W} &= \frac{1}{2^{4n+3}}\left[\sum_{k=0}^{2n+1}\tbinom{4n+2}{2k}H_{(4n-4k+2)s}(|s+c_3|)+\sum_{k=0}^{2n}\tbinom{4n+2}{2k+1}H_{(4n-4k)s}(|s-c_3|)\right],
\end{split}
\end{equation}}
where $\tbinom{N-1}{j}$ stands for the binomial number.\\
Case 2: if $s=0$, set $C=\max{\{|c_1|,|c_2|,|c_3|\}}$, then
\begin{equation}
\begin{split}
\mathcal{Q}_{A_1,A_2,\ldots,A_N}(\rho)&=\sum_i\lambda_i\log_2\lambda_i+N-\frac{1}{2}H(C).
\end{split}
\end{equation}
\end{theorem}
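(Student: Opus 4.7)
The plan is to proceed by extending the frameworks of Theorem~\ref{AThm1} and the preceding four-qubit theorem to general $N$, following the same architecture: sequential von~Neumann measurement on subsystems $A_1,\ldots,A_{N-1}$, reduction of the optimization by independence of the Bloch parameters at each level, and a monotonicity argument on partial derivatives with respect to the deepest-level variables. First I would establish notation for the parameters $z_3^{(j_1\cdots j_{k-1})}$ arising at each measurement level from the Bloch expansion of $V_{A_k}^{(j_1\cdots j_{k-1})}$, together with the probabilities $p_{j_1\cdots j_{N-1}}=\frac{1}{2^{N-1}}\prod$-type expressions that appeared already in the $N=3$ and $N=4$ computations. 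Writing $\mathcal{Q}_{A_1,\ldots,A_N}(\rho)=\sum_i\lambda_i\log_2\lambda_i+N-\tfrac12 H(s)-\max W$, the target becomes evaluating $\max W$.

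Next I would use the independence of the vectors $\vec z$, $\vec z^{(j_1)}$, $\vec z^{(j_1 j_2)},\ldots$ at distinct levels (exactly the step used in the four-qubit argument) to show that the maximum of the outer entropies is attained when all intermediate $z_3^{(\cdots)}$ are equal to $1$, reducing $\max W$ to an optimization of the innermost entropy over the $2^{N-2}$ leaf variables $z_3^{(j_1\cdots j_{N-2})}$. At each leaf I would invoke the same envelope $\varphi_{j_1\cdots j_{N-2}}\le c^2[1-(z_3^{(j_1\cdots j_{N-2})})^2]+c_3^2(z_3^{(j_1\cdots j_{N-2})})^2$ that was used in Case~1 of both earlier theorems, valid because $c=\max\{|c_1|,|c_2|\}$.

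For Case~1 I would compute $\partial W/\partial z_3^{(j_1\cdots j_{N-2})}$; the formula will mirror \eqref{AFDZ20} and \eqref{ATDZjk3}, with the only structural change being that the normalizer becomes $U_{j_1\cdots j_{N-2}}=1+\sum_{k=1}^{N-2}(-1)^{j_k}s$, whose minimum absolute value is $1-(N-2)|s|$. The same two ingredients then drive positivity: the hypothesis $c_3\le 0$ makes the $s\log_2\frac{1-E_+^2}{1-E_-^2}$ term nonnegative, while strict monotonicity of $g(x)=\tfrac1x\log_2\tfrac{1+x}{1-x}$ on $(0,1)$ combined with either $c_3^2\ge c^2$ or $\frac{s^2}{1-(N-2)|s|}\ge\frac{c_3^2-c^2}{c_3}$ makes the remaining sum nonnegative. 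The appearance of $1-(N-2)|s|$ in the second hypothesis of the theorem is precisely the $N$-qubit analogue of $1-2|s|$ seen in the four-qubit case, confirming that the generalization is consistent. Substituting $z_3^{(\cdots)}=1$ everywhere then collapses each leaf entropy to a term of the form $H_{(\cdots)s}(|s\pm c_3|)$.

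Finally I would do the combinatorial bookkeeping: at every leaf the shift on the entropy subscript is $(-1)^{j_1}s+\cdots+(-1)^{j_{N-1}}s$ and the internal sign on $c_3$ is $(-1)^{j_1+\cdots+j_{N-1}}$. Grouping the $2^{N-1}$ leaves by the parity and count of $j_k=1$'s gives binomial multiplicities $\binom{N-1}{k}$, and splitting by parity of $k$ separates the $|s+c_3|$ and $|s-c_3|$ contributions. The four residue classes $N\bmod 4$ shift which sign is central (depending on whether $\lfloor N/2\rfloor$ is even or odd), producing the four expressions \eqref{A4n}--\eqref{A4n+3}. Case~2 ($s=0$) is immediate: all offsets vanish, $W$ reduces to $\frac{1}{2^{N-1}}\sum H(\alpha_{\cdots})$ with $\alpha_{\cdots}\le C$, and monotonicity of $H$ on $[0,1]$ yields $\max W=\tfrac12 H(C)$.

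I expect the main obstacle to be the combinatorial case analysis modulo~$4$: the analytic core is a direct extension of \eqref{AFDZ20}--\eqref{AFDZ21} and \eqref{ATDZjk3}, but keeping track of how the $2^{N-1}$ sign patterns $(-1)^{j_1+\cdots+j_{N-1}}$ redistribute across the two groups (even and odd parity) to produce exactly the stated binomial sums requires careful indexing. A clean way to handle this is to classify leaves by $k=\#\{i:j_i=1\}$, note that the offset index equals $(N-1-2k)s$ and the $c_3$-sign equals $(-1)^k$, and then split $\sum_{k=0}^{N-1}\binom{N-1}{k}$ into even and odd $k$; the four residue classes differ only in which of $|s+c_3|$ and $|s-c_3|$ gets the largest offset $|N-1|s$, which is the only thing the case split is really recording.
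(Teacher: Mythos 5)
Your proposal follows essentially the same route as the paper: reduce via independence of the level-wise Bloch vectors to separate minimizations, set the intermediate $z_3^{(\cdots)}=1$, bound $\varphi$ by $c^2[1-z^2]+c_3^2z^2$, prove monotonicity of the leaf derivatives (with normalizer $U=1+\sum_k(-1)^{u_k}s$, whence the $1-(N-2)|s|$ condition), and then group the $2^{N-1}$ sign patterns by $k=\#\{i:u_i=1\}$ into binomial sums. In fact your plan spells out the derivative and combinatorial steps in more detail than the paper, which at this point only asserts them "similar to the 3- and 4-qubit discussion"; the one small slip is your closing remark that the residue of $N$ modulo $4$ determines which of $|s\pm c_3|$ "gets the largest offset" — in all four cases even $k$ pairs with $|s+c_3|$ and odd $k$ with $|s-c_3|$, and the mod-$4$ split only reindexes the summation ranges.
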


\begin{proof}
Based on the previous results of 3-qubits and 4-qubits, we can know that
$S(\rho_{A_1})=1-\frac{1}{2}H(s)$ and $(k=2,\ldots,N-1)$
{\small
\begin{equation}
\begin{split}
&S_{A_k|\Pi^{A_1\ldots A_{k-1}}}(\rho) =1-\frac{1}{2^k}\sum_{u_i=0,1}\bigg[\Big(1+(-1)^{u_1}sz_3+(-1)^{u_2}sz_3^{(u_1)}+(-1)^{u_3}sz_3^{(u_1u_2)}+\cdots\\
&+(-1)^{u_{k-1}}sz_3^{(u_1\cdots \mu_{k-2})}+(-1)^{u_k}s\Big)\log_2\Big(1+(-1)^{u_1}sz_3+(-1)^{u_2}sz_3^{(u_1)}+(-1)^{u_3}sz_3^{(u_1u_2)}\\
&+\cdots+(-1)^{u_{k-1}}sz_3^{(u_1\cdots \mu_{k-2})}+(-1)^{u_k}s\Big)\bigg]+\frac{1}{2^{k-1}}\sum_{u_i=0,1}\bigg[\Big(1+(-1)^{u_1}sz_3+(-1)^{u_2}sz_3^{(u_1)}\\
&+(-1)^{u_3}sz_3^{(u_1u_2)}+\cdots+(-1)^{u_{k-1}}sz_3^{(u_1\cdots u_{k-2})}\Big)\log_2\Big(1+(-1)^{u_1}sz_3+(-1)^{u_2}sz_3^{(u_1)}\\
&+(-1)^{u_3}sz_3^{(u_1u_2)}+\cdots+(-1)^{u_{k-1}}sz_3^{(u_1\cdots u_{k-2})}\Big)\bigg] ,
\end{split}
\end{equation}}
where $u_i$ stands for the outcome of the measurement of subsystem $A_i$ and the $\sum_{u_i=0,1}$ is a summation over all possible combinations of $u_1 = 0, 1$, $u_2 = 0, 1,\ldots ,u_{k} = 0, 1$.
Furthermore, since $z_3,z_3^{(u_1)},z_3^{(u_1u_2)},\ldots,z_3^{(u_1u_2\ldots u_{k-2})}$ are independent of each other, we have
\begin{equation}
\begin{split}\min_{\Pi^{A_1\ldots A_{N-1}}}&\{S_{A_2|\Pi^{A_1}}(\rho)+\cdots+S_{A_{N}|\Pi^{A_1\ldots A_{N-1}}}(\rho)\}\\
&=\min{S_{A_2|\Pi^{A_1}}(\rho)}+\cdots+\min{S_{A_{N}|\Pi^{A_1\ldots A_{N-1}}}(\rho)}.
\end{split}
\end{equation}
Meanwhile,
for $k=2,\ldots,N-1$, the minimum value of $S_{A_k|\Pi^{A_1\ldots A_{k-1}}}(\rho)$ is attained at $z_3^{(u_1)}=1$, $z_3^{(u_1u_{2})}=1$,\ldots,$z_3^{(u_1\cdots u_{k-2})}=1$. So
\begin{equation}
\begin{split}
S(\rho_{A_1})&=1-\frac{1}{2}H(s):=1-S_1;\\
\min{S_{A_2|\Pi^{A_1}}(\rho)}&=1-\frac{1}{4}H(2s)+\frac{1}{2}H(s):=1-S_2+S_1;\\
&\cdots\\
\min{S_{A_{N-1}|\Pi^{A_1\ldots A_{N-2}}}(\rho) } &=1-\frac{1}{2^{N-1}}\sum_{j=0}^{N-1}\tbinom{N-1}{j}H\big((N-1-2j)s\big)\\
&+\frac{1}{2^{N-2}}\sum_{j=0}^{N-2}\tbinom{N-2}{j}H\big((N-2-2j)s\big)\\
:&=1-S_{N-1}+S_{N-2};\\
\min{S_{A_{N}|\Pi^{A_1\ldots A_{N-1}}}(\rho)} &= 1-\max{W}.
\end{split}
\end{equation}
where
\begin{equation}
\begin{split}
W=&\frac{1}{2^N}\sum_{u_i=0,1}\bigg[1+(-1)^{u_{N-1}}sz_3^{(u_1\cdots u_{N-2})}+\sum_{j=1}^{N-2}(-1)^{u_j}s+(-1)^{u_N}\xi_{u_1\cdots u_{N-1}}\bigg]\\&\cdot\log_2\bigg[1+(-1)^{u_{N-1}}sz_3^{(u_1\cdots u_{N-2})}+\sum_{j=1}^{N-2}(-1)^{u_j}s+(-1)^{u_N}\xi_{u_1\cdots u_{N-1}}\bigg]\\&+\frac{1}{2^{N-1}}\sum_{u_i=0,1}\bigg[1+\sum_{j=1}^{N-2}(-1)^{u_j}s+(-1)^{u_{N-1}}sz_3^{(u_1\cdots u_{N-2})}\bigg]\\&\cdot\log_2\bigg[1
+\sum_{j=1}^{N-2}(-1)^{u_j}s+(-1)^{u_{N-1}}sz_3^{(u_1\cdots u_{N-2})}\bigg]
\end{split}
\end{equation}
 with
 {\small$$\xi_{u_1\cdots u_{N-1}} = \sqrt{(c_1z_1^{(u_1\cdots u_{N-2})})^2+(c_2z_2^{(u_1\cdots u_{N-2})})^2+(s+(-1)^{\sum_{j=1}^{N-1}u_j}c_3z_3^{(u_1\cdots u_{N-2})})^2}.$$}
It follows from \eqref{A1.2} that
\begin{equation}
\begin{split}
\mathcal{Q}_{A_1,A_2,\ldots,A_N}(\rho) &= \sum_i\lambda_i\log_2\lambda_i+N-\frac{1}{2^{N-1}}\sum_{j=0}^{N-1}\tbinom{N-1}{j}H\big((N-1-2j)s\big)\\
&-\max{W}.
\end{split}
\end{equation}
Similar to the previous discussion of 3- and 4-qubits, it is not difficult to get the following two results.

Case 1: if $c_3\leq 0$ and $c_3^2\geq c^2$ or $\frac{s^2}{1-(N-2)|s|}\geq \frac{c_3^2-c^2}{c_3}$,
{
\begin{equation}
\begin{split}
\max{W}=&\frac{1}{2^N}\sum_{u_k=0,1}\bigg[\Big(1+\sum_{k=1}^{N-1} (-1)^{u_k}s+(-1)^{u_N}\Big|s+(-1)^{\sum_{k=1}^{N-1}u_k}c_3\Big|\Big)\log_2\Big(1\\
&+\sum_{k=1}^{N-1} (-1)^{u_k}s+(-1)^{u_N}\Big|s+(-1)^{\sum_{k=1}^{N-1}u_k}c_3\Big|\Big)\bigg]-\frac{1}{2^{N-1}}\sum_{u_k=0,1}\Big(1\\
&+\sum_{k=1}^{N-1} (-1)^{u_k}s\Big)\log_2\Big(1+\sum_{k=1}^{N-1} (-1)^{u_k}s\Big),
\end{split}
\end{equation}}
where the $\sum_{u_k=0,1}$ stands for a summation over all possible combinations of $u_1 = 0, 1$, $u_2 = 0, 1\ldots ,u_{N} = 0, 1$.
By further discussing by dividing into four cases $N=4n$, $N=4n+1$,$N=4n+2$, $N=4n+3$, $n\in\mathbb{N}^+$, the results in Theorem \ref{AThmN} are proved.

Case 2: if $s=0$, set $C=\max{\{|c_1|,|c_2|,|c_3|\}}$, then
\begin{equation}
\begin{split}
&\max{W}=\frac{1}{2}[H(C)].
\end{split}
\end{equation}
\end{proof}

\begin{theorem}\label{AThmN2}
For the following family of N-qubit states,
\begin{equation}
\begin{split}
\rho = \frac{1}{2^N}(\mathbb{I}+s_1\sigma_3\otimes \mathbb{I}\cdots\otimes \mathbb{I}+s_2\mathbb{I}\otimes \sigma_3\otimes\mathbb{I}\cdots\otimes \mathbb{I}+\cdots+s_N\mathbb{I}\otimes\cdots\otimes\mathbb{I}\otimes\sigma_3),
\end{split}
\end{equation}
the quantum discord is given by
\begin{equation}
\begin{split}
\mathcal{Q}_{A_1,A_2,\ldots,A_N}(\rho) = \sum_i\lambda_i\log_2\lambda_i+N-\frac{1}{2^{N}}\sum_{u_i=0,1}H_{\sum_{i=1}^{N-1}(-1)^{u_i}s_i}(|s_{N}|).
\end{split}
\end{equation}
\end{theorem}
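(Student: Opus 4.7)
The plan is to mirror the iterative measurement scheme developed in the proofs of Theorem~\ref{AThm1} and Theorem~\ref{AThmN}, exploiting the key simplification that $\rho$ contains only local $\sigma_3$ terms and no multi-site correlations. First I would observe that $\rho$ is a classical state, i.e.\ diagonal in the computational basis, with eigenvalues $\lambda_{u_1\cdots u_N}=\tfrac{1}{2^N}\bigl(1+\sum_{i=1}^N(-1)^{u_i}s_i\bigr)$; tracing out all but the first subsystem gives $\rho_{A_1}=\tfrac{1}{2}(\mathbb{I}+s_1\sigma_3)$, so $S(\rho_{A_1})=1-\tfrac{1}{2}H(s_1)$ and hence $S_{A_2\cdots A_N|A_1}(\rho)=-\sum_u\lambda_u\log_2\lambda_u-1+\tfrac{1}{2}H(s_1)$.

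Next I would run the Bloch-measurement computation as in the derivations leading to \eqref{A2.5}--\eqref{A2.8} and their $N$-partite analogue, but with every $c_i$-dependent contribution dropped. The crucial structural observation is that the absence of correlation terms forces the post-measurement state on $A_k,\ldots,A_N$ to retain the same local $\sigma_3$-only form after each step, with renormalized biases. Introducing $K_{k-1}:=1+\sum_{i=1}^{k-1}(-1)^{u_i}s_iz_3^{(u_1\cdots u_{i-1})}$, one finds $p_{u_1\cdots u_{k-1}}=K_{k-1}/2^{k-1}$ and the reduced state on $A_k$ is $\tfrac{1}{2}(\mathbb{I}+\bar{s}_k\sigma_3)$ with $\bar{s}_k=s_k/K_{k-1}$, so each conditional entropy depends only on the sequence of $z_3^{(\cdots)}$ parameters.

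I would then show that each $S_{A_k|\Pi^{A_1\cdots A_{k-1}}}(\rho)$ is minimized at $z_3^{(u_1\cdots u_{i-1})}=1$ for every conditional measurement, i.e.\ by measuring every subsystem in the $\sigma_3$ basis. When $c=c_3=0$, the intertwined $(c_3^2-c^2)$-type contributions in the partial-derivative formulas \eqref{AFDZ20}, \eqref{AFDZ21}, \eqref{ATDZjk3} vanish identically, collapsing each sign argument to a single $s\log_2\tfrac{1-X^2}{1-Y^2}$-type check that is immediate. Mutual independence of the Bloch parameters at different measurement levels then guarantees that the joint minimum of $\sum_{k=2}^N S_{A_k|\Pi^{A_1\cdots A_{k-1}}}(\rho)$ is attained simultaneously at $z_3^{(\cdots)}\equiv 1$.

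Finally I would close the argument by a telescoping computation. Setting $\Sigma_k:=\tfrac{1}{2^{k-1}}\sum_{u_1\cdots u_{k-1}}K_{k-1}\log_2 K_{k-1}$ with $z_3^{(\cdots)}=1$, algebraic rearrangement using $K_{k-1}(1\pm\bar{s}_k)=K_{k-1}\pm s_k$ yields $S_{A_k|\Pi^{A_1\cdots A_{k-1}}}(\rho)=1+\Sigma_k-\Sigma_{k+1}$, where absorbing $\pm|s_k|$ into an extra summation over $u_k$ gives $\tfrac{1}{2^k}\sum_{u_1\cdots u_{k-1}}H_{\sum_{i=1}^{k-1}(-1)^{u_i}s_i}(|s_k|)=\Sigma_{k+1}$. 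Summing over $k=2,\ldots,N$ telescopes to $(N-1)+\Sigma_2-\Sigma_{N+1}$; substituting into \eqref{A1.2} together with $\Sigma_2=\tfrac{1}{2}H(s_1)$ produces the stated formula. The main obstacle is the third step: although each individual sign check simplifies dramatically when $c_i=0$, coordinating the $2^{N-1}$ nested conditional measurements so that the joint minimum really coincides with the common corner $z_3^{(\cdots)}\equiv 1$ is the principal bookkeeping hurdle.
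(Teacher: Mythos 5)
Your proposal is correct and follows essentially the same route as the paper: iterate the conditional von Neumann measurements, observe that with no correlation terms each conditional state stays of the form $\tfrac{1}{2}(\mathbb{I}+\bar{s}_k\sigma_3)$, argue the minimum of each conditional entropy sits at $z_3^{(\cdots)}\equiv 1$, and telescope the resulting $1-S_k+S_{k-1}$ sums (your $\Sigma_{k+1}$ is exactly the paper's $S_k$). If anything, your write-up is more explicit than the paper's, which simply asserts "we follow the same strategy" for the optimality of the $\sigma_3$-basis measurements.
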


\begin{proof}
According to the previous steps,
we follow the same strategy to obtain that
\begin{equation}
\begin{split}
S(\rho_{A_1})&=1-\frac{1}{2}H(s_1):=1-S_1;\\
\min{S_{A_2|\Pi^{A_1}}(\rho)}&=1-\frac{1}{4}H_{s_1}(s_2)-\frac{1}{4}H_{-s_1}(s_2)+\frac{1}{2}H(s_1)\\
:&=1-S_2+S_1;\\
&\ \ \cdots\\
\min{S_{A_{N-1}|\Pi^{A_1\ldots A_{N-2}}}(\rho)} &=1-\frac{1}{2^{N-1}}\sum_{u_i=0,1}H_{\sum_{i=1}^{N-2}(-1)^{u_i}s_i}(s_{N-1})\\
&+\frac{1}{2^{N-2}}\sum_{u_i=0,1}H_{\sum_{i=1}^{N-3}(-1)^{u_i}s_i}(s_{N-2})\\
:&=1-S_{N-1}+S_{N-2};\\
\min{S_{A_{N}|\Pi^{A_1\ldots A_{N-1}}}(\rho)} &= 1-\frac{1}{2^{N}}\sum_{u_i=0,1}H_{\sum_{i=1}^{N-1}(-1)^{u_i}s_i}(|s_{N}|)\\
&+\frac{1}{2^{N-1}}\sum_{u_i=0,1}H_{\sum_{i=1}^{N-2}(-1)^{u_i}s_i}(s_{N-1})\\
:&=1-S_{N}+S_{N-1}.
\end{split}
\end{equation}
Then by \eqref{A1.2}, we have
\begin{equation}
\begin{split}
\mathcal{Q}_{A_1,A_2,\ldots,A_N}(\rho) = \sum_i\lambda_i\log_2\lambda_i+N-\frac{1}{2^{N}}\sum_{u_i=0,1}H_{\sum_{i=1}^{N-1}(-1)^{u_i}s_i}(|s_{N}|).
\end{split}
\end{equation}
\end{proof}

\section*{Appendix B: Quantum Discord for Multi-Partite GHZ state}
	\setcounter{equation}{0} \renewcommand\theequation{B\arabic{equation}}
Next, we will discuss the quantum discord of the N-partite GHZ state $\rho_{GHZ} = \mu|GHZ\rangle\langle GHZ|+\frac{1-\mu}{2^N}\mathbb{I}$, where $|GHZ\rangle = (|0\cdots 0\rangle+|1\cdots 1\rangle)/\sqrt{2}$. The state $\rho_{GHZ}$ in terms of Pauli matrices is given by
{\small
\begin{equation}
\begin{split}\label{BGHZPauliB}
\rho_{GHZ}=&\frac{1}{2^N}\mathbb{I}+\frac{\mu}{2^N}\sum_{t=1}^{\lfloor \frac{N}{2}\rfloor}\bigg[\sum_\pi \mathbf{S}_{\mathbf{\pi}}\Big(\sigma_3^{(1)}\otimes\cdots\otimes\sigma_3^{(2t)}\otimes\mathbb{I}^{ (2t+1)}\otimes\cdots\otimes\mathbb{I}^{(N)}\Big)\\
&+(-1)^t\sum_\pi \mathbf{S}_{\mathbf{\pi}}\Big(\sigma_2^{ (1)}\otimes\cdots\otimes\sigma_2^{(2t)}\otimes\sigma_1^{ (2t+1)}\otimes\cdots\otimes\sigma_1^{ (N)}\Big)\bigg]+\frac{\mu}{2^N}\sigma_1^{\otimes N}\\
=&\frac{1}{2^N}\mathbb{I}+\frac{\mu}{2^N}\sigma_1^{\otimes N}+\frac{\mu}{2^N}\sum_{t=1}^{\lfloor \frac{N}{2}\rfloor}\bigg[\sum_{\mathbf{\pi}} \mathbf{S}_{\pi}\Big(\mathbb{I}^{\otimes (N-2t)}\otimes\sigma_3^{\otimes 2t}\Big)\\
&+(-1)^t\sum_\pi \mathbf{S}_{\mathbf{\pi}}\Big(\sigma_1^{\otimes (N-2t)}\otimes\sigma_2^{\otimes 2t}\Big)\bigg],
\end{split}
\end{equation}}
where $\lfloor ~ \rfloor $ denotes the floor function, and $\mathbf{S}_{\mathbf{\pi}}$ permutes the tensor product by the permutation $\mathbf{\pi}\equiv \begin{pmatrix}
	1&2&\cdots N\\
	\pi_1&\pi_2&\cdots\pi_N
\end{pmatrix}$
 with $\pi_m\in \{1,\ldots,N\}$, i.e, the permutation $\mathbf{\pi}$ sends $v_1\otimes v_2\cdots\otimes v_N$ to $v_{\pi_1}\otimes v_{\pi_2}\cdots\otimes v_{\pi_N}$. $\sum_{\mathbf{\pi}}$ stands for the summation of all distinct permutations of the qubits.

\begin{theorem}
For the $N$-partite GHZ state in \eqref{BGHZPauliB}, the quantum discord is given by
{\small
\begin{equation}
\begin{split}\label{BQGHZ}
\mathcal{Q}_{A_1,\ldots,A_N}(\rho_{GHZ})&=\frac{1}{2^N}(1-\mu)\log_2(1-\mu)+\frac{1+(2^N-1)\mu}{2^N}\log_2(1+(2^N-1)\mu)\\
&-\frac{1}{2^{N-1}}\Big(1+(2^{N-1}-1)\mu\Big)\log_2\Big(1+(2^{N-1}-1)\mu\Big).
\end{split}
\end{equation}}
\end{theorem}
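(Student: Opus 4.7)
The plan is to exploit the strong symmetry of $\rho_{GHZ}$ to reduce the optimization in \eqref{A1.2} to a calculation with $\sigma_3$ measurements on every qubit, after which everything becomes classical and the sum of conditional entropies collapses to a single joint Shannon entropy via the chain rule. To begin I would assemble the measurement-independent ingredients: $\rho_{GHZ}$ has only two distinct eigenvalues, namely $\frac{1+(2^N-1)\mu}{2^N}$ on the one-dimensional span of $|GHZ\rangle$ and $\frac{1-\mu}{2^N}$ on its orthogonal complement, giving a closed form for $S(\rho_{GHZ})$. Since the partial trace of $|GHZ\rangle\langle GHZ|$ over any $N-1$ qubits is $\mathbb{I}/2$, one has $\rho_{A_1} = \mathbb{I}/2$ and $S(\rho_{A_1}) = 1$, hence $S_{A_2\cdots A_N\mid A_1}(\rho_{GHZ}) = S(\rho_{GHZ}) - 1$.

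Next I would argue that the optimal $\Pi^{A_1\cdots A_{N-1}}$ is the computational-basis measurement on every qubit. The pivotal calculation is that $(P_j\otimes\mathbb{I})|GHZ\rangle\langle GHZ|(P_j\otimes\mathbb{I}) = \tfrac{1}{2}|j^N\rangle\langle j^N|$, so a single $\sigma_3$ projection on $A_1$ kills all off-diagonal coherences and renders the conditional state on $A_2\cdots A_N$ diagonal in the computational basis; any subsequent $\sigma_3$ measurement preserves this property. Combined with the permutation symmetry of $\rho_{GHZ}$ and its invariance under the global parity $X^{\otimes N}$, this strongly suggests the minimizer is uniform across qubits and aligned with the $z$-axis. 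To make this rigorous I would follow the Bloch parameterization used for Theorems~\ref{AThm1} and~\ref{AThmN}: write each $V_{A_k}^{(u_1\cdots u_{k-1})} = t\mathbb{I} + \sqrt{-1}\,\vec{y}\cdot\vec{\sigma}$, substitute into the Pauli expansion \eqref{BGHZPauliB}, and verify by derivative analysis that the objective is minimized when $t^2 + y_3^2 = 1$ and $y_1 = y_2 = 0$ on every branch.

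Once optimality is secured, each conditional entropy $S_{A_{k+1}\mid\Pi^{A_1\cdots A_k}}$ collapses to the purely classical conditional entropy $H(U_{k+1}\mid U_1,\ldots,U_k)$ of the outcome variables. The chain rule then telescopes $\sum_{k=1}^{N-1}H(U_{k+1}\mid U_1,\ldots,U_k) = H(U_1,\ldots,U_N) - H(U_1) = H(U_1,\ldots,U_N) - 1$, yielding the compact identity $\mathcal{Q}_{A_1,\ldots,A_N}(\rho_{GHZ}) = H(U_1,\ldots,U_N) - S(\rho_{GHZ})$. A short computation of the outcome distribution gives $P(0^N) = P(1^N) = \mu/2 + (1-\mu)/2^N$ and $P(u) = (1-\mu)/2^N$ for the remaining $2^N-2$ strings; expanding both entropies and using the identity $2\cdot\frac{1+(2^{N-1}-1)\mu}{2^N} - \frac{1+(2^N-1)\mu}{2^N} - \frac{1-\mu}{2^N} = 0$, which forces the $N\cdot(\,\cdot\,)$ linear contributions to cancel, reproduces exactly the three terms in \eqref{BQGHZ}.

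The main obstacle is the optimality argument in the second paragraph. The coherence-killing identity and the $S_N\times\mathbb{Z}_2$ symmetry are compelling heuristics, but a rigorous justification likely requires the same kind of derivative bookkeeping carried out in the proof of Theorem~\ref{AThmN}, now tailored to the richer Pauli expansion \eqref{BGHZPauliB}. Once that is settled, the remaining bookkeeping is short and essentially combinatorial, since the chain-rule identification converts the quantum minimization into a transparent classical entropy difference.
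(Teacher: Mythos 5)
Your proposal arrives at the correct formula and its verifiable steps check out, but it organizes the calculation differently from the paper. The paper carries the full Bloch parameterization of all $N-1$ conditional measurements through the Pauli expansion \eqref{BGHZPauliB}, asserts that each conditional entropy is separately minimized at $z_3^{(u_1\cdots u_{k-1})}=1$ on every branch, classifies the resulting coefficients $\bar{\alpha}_{\vec{u}},\bar{\delta}_{\vec{u}}$ combinatorially according to whether all outcomes coincide, and then telescopes the sum $\sum_k(1-S_k+S_{k-1})$ term by term. You instead observe that once the measurements are computational-basis, every post-measurement conditional state is diagonal, so each $S_{A_{k+1}|\Pi^{A_1\cdots A_k}}$ equals the classical conditional entropy $H(U_{k+1}|U_1,\ldots,U_k)$ of the outcome string, and the classical chain rule collapses the whole sum to $H(U_1,\ldots,U_N)-H(U_1)$. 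Since $H(U_1)=1=S(\rho_{A_1})$, this yields the clean identity $\mathcal{Q}=H(U_1,\ldots,U_N)-S(\rho_{GHZ})$, where $H(U_1,\ldots,U_N)$ is just the Shannon entropy of the diagonal of $\rho_{GHZ}$, with $P(0^N)=P(1^N)=\frac{1+(2^{N-1}-1)\mu}{2^N}$ and $P(u)=\frac{1-\mu}{2^N}$ otherwise; I verified that this reproduces \eqref{BQGHZ} exactly and agrees with the paper's $N-S_N=\frac{1}{2^N}\sum_{\vec u}H_{\bar{\alpha}_{\vec u}}(\bar{\delta}_{\vec u})$ bookkeeping. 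Your route buys a much shorter and more transparent final computation and makes the structural reason for the answer visible (the discord is the gap between the diagonal Shannon entropy and the von Neumann entropy); the paper's route keeps the machinery uniform with Theorems~\ref{AThm1}--\ref{AThmN}. The one genuine weak point, which you correctly flag, is the optimality of the all-$\sigma_3$ measurement: your symmetry and coherence-killing arguments are heuristic, and a rigorous proof would need the derivative analysis over the Bloch variables. Be aware, though, that the paper's own proof does not supply this either — it simply asserts "similar to the previous proof" and takes the minimizer at $z_3=1$ — so your proposal is not less rigorous than the source on this point, merely more candid about it.
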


\begin{proof}
The proof is similar to 3-qubits and 4-qubits.
First, we calculate that the eigenvalues of $\rho_{GHZ}$:
\begin{equation}
\begin{split}\lambda_{1,\ldots,2^N-1}=\frac{1-\mu}{2^N}, \quad \lambda_{2^N}=\frac{1+(2^N-1)\mu}{2^N},
\end{split}
\end{equation}
then
{\small
\begin{equation}
\begin{split}S(\rho_{GHZ})=N-\frac{(1-\mu)(2^N-1)}{2^N}\log_2(1-\mu)-\frac{1+(2^N-1)\mu}{2^N}\log_2[1+(2^N-1)\mu].
\end{split}
\end{equation}}
Since $\rho_{A_1} = \mathrm{tr}_{A_2\cdots A_N}(\rho_{GHZ})=\mathbb{I}/2$, the entropy is calculated as $S_{A_1} = 1$.
After the $m$-th measurement ($m=1,\ldots,N-2$), the state is given by
$$
\rho_{\Pi^{A_1A_2\cdots A_m}} = \sum_{i=1}^{m}\sum_{u_i=0,1}p_{\vec{u}}\rho_{\vec{u}}
$$
with $\vec{u} = (u_1,\ldots,u_m)$. Here, each $u_i $ represents the measurement outcome of the subsystem $A_i (i= 0, 1)$.
Thus, the state on the subsystem $A_{m+1}$  is given by
\begin{equation}
\begin{split}
\mathrm{tr}_{A_{m+1}\cdots A_N}(\rho_{\vec{u}})=&\frac{1}{2(1+a_{\vec{u}})}V_{A_1}\Pi_{u_1}V_{A_1}^{\dagger}\otimes \cdots \otimes V_{A_m}^{(u_1\cdots u_{m-1})}\Pi_{u_m}\Big(V_{A_m}^{(u_1\cdots u_{m-1})}\Big)^{\dagger}\\
&\otimes(\mathbb{I}+a_{\vec{u}}\mathbb{I}+b_{\vec{u}}\sigma_3).
\end{split}
\end{equation}
where
{\small\begin{equation}
\begin{split}
a_{\vec{u}}&=\mu\sum_{j=1}^{\lfloor\frac{\widetilde{m}}{2}\rfloor}\sum_{\mathbf{\pi}}\mathbf{S}_{\mathbf{\pi}}
\Big((-1)^{u_{(1)}+\cdots+u_{(2j)}}1^{u_{(2j+1)}+\cdots+u_{(m)}}z_3^{(\mathbf{1})}\cdots z_3^{(\mathbf{2j})}\Big),\\
b_{\vec{u}}& = \mu\sum_{j=0}^{\lfloor\frac{\bar{m}}{2}\rfloor}\sum_{\mathbf{\pi}}\mathbf{S}_{\mathbf{\pi}}\Big((-1)^{u_{(1)}+\cdots
+u_{(2j+1)}}1^{u_{(2j+2)}+\cdots+u_{(m)}}z_3^{(\mathbf{1})}\cdots z_3^{(\mathbf{2j+1})}\Big)\end{split}
\end{equation}
}
with
$$
\widetilde{m}=
\begin{cases}
m-1 & \ \text{if} ~~~ m ~~~\text{odd} \\
m & \ \text{if}~~~ m ~~~\text{even}
\end{cases},~~~~~~~
\bar{m}=
\begin{cases}
m & \ \text{if} ~~~ m ~~~\text{odd} \\
m-1 & \ \text{if}~~~ m ~~~\text{even}
\end{cases}
$$
and for $i=1,2,3$
$$z_i^{(\mathbf{j})}:=
\begin{cases}
z_i & \ \text{if} ~~~ \mathbf{j}=1, \\
z_i^{(u_1u_2\cdots u_{j-1})} & \text{otherwise}
\end{cases}$$
satisfying
$(z_1^{(\mathbf{j})})^2+(z_2^{(\mathbf{j})})^2+(z_3^{(\mathbf{j})})^2 = 1$ for all $\mathbf{j}$. Note that each different selection of $u_1,\ldots,u_j$ corresponds to a set of variables.
For example: if $\mathbf{j}=2$, then $z_i^{(\mathbf{2})}$ corresponds to $z_1^{(0)}$, $z_2^{(0)}$, $z_3^{(0)}$, $z_1^{(1)}$, $z_2^{(1)}$, $z_3^{(1)}$.
The variables involved increase exponentially with the number of measurements.
The entropy is given by
\begin{equation}
\begin{split}
S_{A_{m+1}|\Pi^{A_1\cdots A_m}}(\rho_{GHZ}) = 1-\frac{1}{2^{m+1}}\sum_{i=1}^{m}\sum_{u_i=0,1}H_{a_{\vec{u}}}(b_{\vec{u}})+\frac{1}{2^m}\sum_{i=1}^{m}\sum_{u_i=0,1}H(a_{\vec{u}}).
\end{split}
\end{equation}
After the $(N-1)$ measurement, the state will change to the ensemble $\{\rho_{_{\vec{u}}}, p_{_{\vec{u}}}\}$ with
\begin{equation}
\begin{split}
\rho_{_{\vec{u}}}=&\frac{1}{2(1+\alpha_{\vec{u}})}V_{A_1}\Pi_{u_1}V_{A_1}^{\dagger}\otimes \cdots \otimes V_{A_{N-1}}^{(u_1\cdots u_{N-2})}\Pi_{u_{N-1}}\Big(V_{A_{N-1}}^{(u_1\cdots u_{{N-2}})}\Big)^{\dagger}\\
&\otimes\left(\mathbb{I}+\alpha_{\vec{u}} \mathbb{I}+\beta_{\vec{u}}\sigma_1+\gamma_{\vec{u}}\sigma_2+\delta_{\vec{u}}\sigma_3\right).
\end{split}
\end{equation}
Here, $\vec{u} = (u_1,\ldots,u_{N-1})$ and
{\scriptsize
\begin{equation}
\begin{split}
\alpha_{\vec{u}} &= \mu\sum_{t=1}^{\lfloor\frac{\bar{N}-1}{2}\rfloor}\sum_{\mathbf{\pi}}\mathbf{S}_{\mathbf{\pi}}\Big((-1)^{u_{(1)}+\cdots+u_{(2t)}}
1^{u_{(2t+1)}+\cdots+u_{(N-1)}}z_3^{(\mathbf{1})}\cdots z_3^{(\mathbf{2t})}\Big),\\
\beta_{\vec{u}} &= \mu\sum_{t=1}^{\lfloor\frac{\bar{N}-1}{2}\rfloor}\sum_{\mathbf{\pi}}\mathbf{S}_{\mathbf{\pi}}\Big((-1)^{u_{(1)}+\cdots+u_{(N-1)}+t}
z_2^{(\mathbf{1})}\cdots z_2^{(\mathbf{2t})}z_1^{(\mathbf{2t+1})}\cdots z_1^{(\mathbf{N-1})}\Big),\\
\gamma_{\vec{u}} &= \mu\sum_{t=0}^{\lfloor\frac{\widetilde{N}-1}{2}\rfloor}\sum_{\mathbf{\pi}}\mathbf{S}_{\mathbf{\pi}}\Big((-1)^{u_{(1)}+\cdots+u_{(N-1)}+t}
z_2^{(\mathbf{1})}\cdots z_2^{(\mathbf{2t+1})}z_1^{(\mathbf{2t+2})}\cdots z_1^{(\mathbf{N-1})}\Big),\\
\delta_{\vec{u}} &= \mu\sum_{t=0}^{\lfloor\frac{\widetilde{N}-1}{2}\rfloor}\sum_{\mathbf{\pi}}\mathbf{S}_{\mathbf{\pi}}\Big((-1)^{u_{(1)}+\cdots+u_{(2t+1)}}
1^{u_{(2t+2)}+\cdots+u_{(N-1)}}z_3^{(\mathbf{1})}\cdots z_3^{(\mathbf{2t+1})}\Big).\end{split}
\end{equation}
}
Similar to the previous proof,
\begin{equation}
\begin{split}\min_{\Pi^{A_1\ldots A_{N-1}}}&\{S_{A_2|\Pi^{A_1}}(\rho)+\cdots+S_{A_{N}|\Pi^{A_1\ldots A_{N-1}}}(\rho)\}\\
&=\min{S_{A_2|\Pi^{A_1}}(\rho)}+\cdots+\min{S_{A_{N}|\Pi^{A_1\ldots A_{N-1}}}(\rho)}
\end{split}
\end{equation}
and
{\small
\begin{equation}
\begin{split}
S(\rho_{A_1})&=1;\\
\min{S_{A_2|\Pi^{A_1}}(\rho)}&=1-\frac{1}{2}H(\mu z_3):=1-S_2;\\
&\cdots\\
\min{S_{A_{N-1}|\Pi^{A_1\ldots A_{N-2}}}(\rho) } &=1-\frac{1}{2^{N-1}}\sum_{i=1}^{N-2}\sum_{u_i=0,1}H_{\bar{a}_{\vec{u}}}(\bar{b}_{\vec{u}})+\frac{1}{2^{N-2}}\sum_{i=1}^{N-3}\sum_{u_i=0,1}H(\bar{a}_{\vec{u}})\\
&=:1-S_{N-1}+S_{N-2};\\
\min{S_{A_{N}|\Pi^{A_1\ldots A_{N-1}}}(\rho)} &= 1-\frac{1}{2^{N}}\sum_{i=1}^{N-1}\sum_{u_i=0,1}H_{\bar{\alpha}_{\vec{u}}}(\bar{\delta}_{\vec{u}})+\frac{1}{2^{N-1}}\sum_{i=1}^{N-2}\sum_{u_i=0,1}H(\bar{\alpha}_{\vec{u}})\\
& =: 1-S_{N}+S_{N-1},
\end{split}
\end{equation}}
where \begin{equation}
\begin{split}\bar{a}_{\vec{u}} &= \mu\sum_{t=1}^{\lfloor\frac{\bar{N}-1}{2}\rfloor}\sum_{\pi}\mathbf{S}_{\pi}\Big((-1)^{u_{(1)}+\cdots+u_{(2t)}}1^{u_{(2t+1)}+\cdots+u_{(N-1)}}\Big),\\
\bar{b}_{\vec{u}} &= \mu\sum_{t=0}^{\lfloor\frac{\widetilde{N}-1}{2}\rfloor}\sum_{\pi}\mathbf{S}_{\pi}\Big((-1)^{u_{(1)}+\cdots+u_{(2t+1)}}1^{u_{(2t+2)}+\cdots+u_{(N-1)}}\Big),\\
\bar{\alpha}_{\vec{u}} &= \mu\sum_{t=1}^{\lfloor\frac{\bar{N}-1}{2}\rfloor}\sum_{\pi}\mathbf{S}_{\pi}\Big((-1)^{u_{(1)}+u_{(2)}+\cdots+u_{(2t)}}1^{u_{(2t+1)}+\cdots+u_{(N-1)}}\Big),\\
\bar{\delta}_{\vec{u}} &= \mu\sum_{t=0}^{\lfloor\frac{\widetilde{N}-1}{2}\rfloor}\sum_{\pi}\mathbf{S}_{\pi}\Big((-1)^{u_{(1)}+u_{(2)}+\cdots+u_{(2t+1)}}1^{u_{(2t+2)}+\cdots+u_{(N-1)}}\Big).\end{split}
\end{equation}
Due to symmetry, we obtain the following classification for $\bar{\alpha}_{\vec{u}}$ and $\bar{\delta}_{\vec{u}}$:
\begin{equation}
\begin{split}
\bar{\alpha}_{\vec{u}}&=
\begin{cases}
\mu\sum_{t=1}^{\lfloor\frac{\widetilde{N}-1}{2}\rfloor}\tbinom{N-1}{2t} & \ \text{if all the elements} ~~u_i~~ \text{in} ~~\vec{u}~~ \text{are all 0 or all 1},\\
-\mu & \ \text{otherwise},
\end{cases}\\
\bar{\delta}_{\vec{u}}&=
\begin{cases}
\mu\sum_{t=0}^{\lfloor\frac{\widetilde{N}-1}{2}\rfloor}\tbinom{N-1}{2t+1} & \ \text{if all the elements} ~~u_i~~ \text{in} ~~\vec{u}~~ \text{are all 0} ,\\
-\mu\sum_{t=0}^{\lfloor\frac{\widetilde{N}-1}{2}\rfloor}\tbinom{N-1}{2t+1} & \ \text{if all the elements} ~~u_i~~ \text{in} ~~\vec{u}~~ \text{are all 1},\\
0 & \ \text{otherwise}.
\end{cases}
\end{split}
\end{equation}
Thus,
{\small
\begin{equation}
\begin{split}&\min_{\Pi^{A_1\ldots A_{N-1}}}\{S_{A_2|\Pi^{A_1}}(\rho)+\cdots+S_{A_{N}|\Pi^{A_1\ldots A_{N-1}}}(\rho)\}=N-\frac{1}{2^{N}}\sum_{i=1}^{N-1}\sum_{u_i=0,1}H_{\bar{\alpha}_{\vec{u}}}(\bar{\delta}_{\vec{u}})\\
&=N-\frac{2^{N-1}-1}{2^{N-1}}(1-\mu)\log_2(1-\mu)-\frac{1+(2^{N-1}-1)\mu}{2^{N-1}}\log_2\Big(1+(2^{N-1}-1)\mu\Big).
\end{split}
\end{equation}}
Then \eqref{A1.2} implies the result \eqref{BQGHZ}.
\end{proof}
\vskip 0.1in

\bibliographystyle{amsalpha}

\end{document}